\newtheorem{theorem}{Theorem}
\newtheorem{lemma}[theorem]{Lemma}
\newtheorem{proposition}[theorem]{Proposition}
\theoremstyle{definition}
\newcommand{\ourvec}[1]{\vec{#1}\@ifnextchar{^}{\,}{}}
\newcommand{\ourbound}{42.11}
\g@addto@macro\bfseries{\boldmath}
\newcommand*\patchAmsMathEnvironmentForLineno[1]{%
\expandafter\let\csname old#1\expandafter\endcsname\csname #1\endcsname
\expandafter\let\csname oldend#1\expandafter\endcsname\csname end#1\endcsname
\renewenvironment{#1}%
     {\linenomath\csname old#1\endcsname}%
     {\csname oldend#1\endcsname\endlinenomath}}%
\newcommand*\patchBothAmsMathEnvironmentsForLineno[1]{%
  \patchAmsMathEnvironmentForLineno{#1}%
  \patchAmsMathEnvironmentForLineno{#1*}}%
\date{}
\newcommand{\ceil}[1]{\ensuremath{\left \lceil #1 \right \rceil}}
\newcommand\blfootnote[1]{%
  \begingroup
  \renewcommand\thefootnote{}\footnote{#1}%
  \addtocounter{footnote}{-1}%
  \endgroup
}
\title{A new lower bound on the maximum number of plane graphs using production matrices}
\author[1]{Clemens Huemer}
\author[2]{Alexander Pilz}
\author[1]{Rodrigo I. Silveira}
\affil[1]{Departament de Matem\`atiques, Universitat Polit\`ecnica de Catalunya, Barcelona, Spain\\
  \{\texttt{clemens.huemer,rodrigo.silveira}\}\texttt{@upc.edu}.}
\affil[2]{Institute of Software Technology, Graz University of Technology, Austria\\
  \texttt{apilz@ist.tugraz.at}.}
\newcommand{\fig}[1]{\figurename~\ref{#1}}
\g@addto@macro\bfseries{\boldmath}
\begin{document}

\maketitle

\begin{abstract}
We use the concept of production matrices to show that there exist sets of $n$ points in the plane that admit $\Omega(\ourbound^n)$ crossing-free geometric graphs.
This improves the previously best known bound of $\Omega(41.18^n)$ by Aichholzer et al.~(2007).
\end{abstract}

\section{Introduction}
A \emph{geometric graph} on a set $S$ of $n$ labeled points in the Euclidean plane is a graph with vertex set $S$ where each edge is represented by a straight line segment between the corresponding points.\blfootnote{\begin{minipage}[l]{0.2\textwidth} \vspace{-8pt}\includegraphics[trim=10cm 6cm 10cm 5cm,clip,scale=0.15]{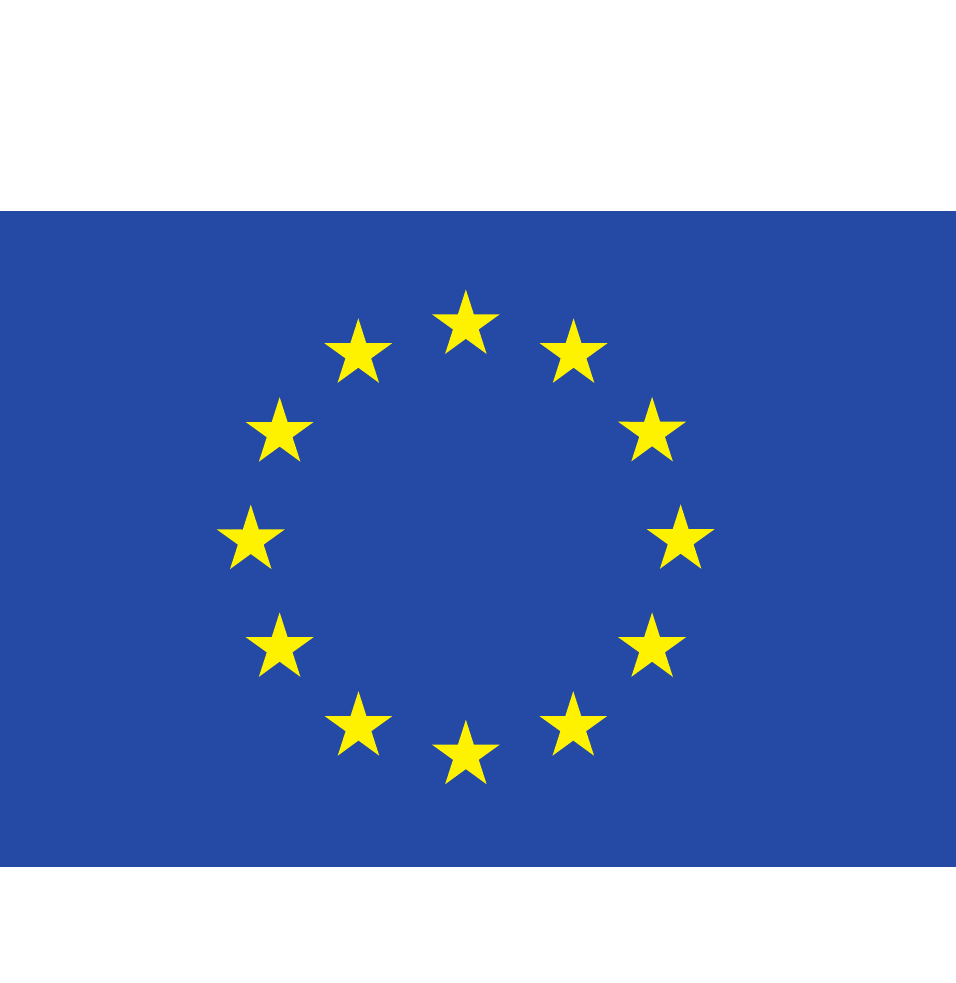} \end{minipage}  \hspace{-1.3cm} \begin{minipage}[l][1cm]{0.82\textwidth}
      This project has received funding from the European Union's Horizon 2020 research and innovation programme under the Marie Sk\l{}odowska-Curie grant agreement No 734922.
     \end{minipage}}
\blfootnote{C.~H.\ and R.~S.\ were also supported by projects MINECO MTM2015-63791-R and Gen.\ Cat.\ 2017SGR1336 and 2017SGR1640, respectively. R.~S.\ was further supported by MINECO through the Ram{\'o}n y Cajal program. A.~P.\ is supported by a Schr\"odinger fellowship of the Austrian Science Fund (FWF): J-3847-N35.
This work was done while A.~P.\ was at the Department of Computer Science of ETH Z\"urich.}
In this work, we are interested in the number of \emph{crossing-free} geometric graphs on a set of $n$~points, i.e., geometric graphs in which all segments are interior-disjoint, also referred to as \emph{plane graphs}.
It is easy to see that, for any $n$ points, this number is at least exponential in~$n$.
In 1982, Ajtai et al.~\cite{ajtai} showed that the upper bound on this number is also exponential.
Currently, it is known that any set of $n$ points admits not more than $O(187.53^n)$ crossing-free graphs~\cite{sharir_sheffer_charging}.
(Bounds on the number of graphs are usually stated for point sets in general position, i.e., without any three points on a line.
Further, as the points representing vertices are distinct, geometric graphs are considered labeled in the context of counting.)
While the number of crossing-free graphs is minimized if the point set is in convex position~\cite{number_plane_geometric}, not much is known about sets maximizing this number.
The best known example by now is the so-called \emph{double-zig-zag chain}~\cite{number_plane_geometric}, with $\Omega(41.18^n)$ crossing-free graphs.
As usual, such lower-bound constructions rely on describing a family of point sets with convenient structural properties.
In this paper, we improve this bound by showing that another well-known family of point sets, a generalization of the double-zig-zag chain, admits $\Omega(\ourbound^n)$ crossing-free graphs.
This generalization has also been used for similar bounds on triangulations~\cite{bounds_multiplicity} and, recently, on crossing-free perfect matchings~\cite{asinowsi_matchings}, but the number of general crossing-free graphs on this configuration was not known.
The method that allows us to analyze these point sets is the use of \emph{production matrices}, a technique that we consider interesting on its own.

This method works by implicitly arranging the graphs in a \emph{generating tree}, describing a rule to produce a graph from one on fewer points.
We consider a partition of the set of graphs on $i \leq n$ points into $n$ parts according to their degree at an arbitrarily defined \emph{root vertex}, and represent the cardinality of each part in a vector $\ourvec{v}^{(i)}$.
The first element of  $\ourvec{v}^{(i)}$ is the number of graphs with the root vertex having degree~0, the second one that of graphs with root vertex with degree~1, and so on.
We then devise how to generate graphs on $i+c$ points (for some small positive number~$c$) with a new root vertex, from the graphs counted in $\ourvec{v}^{(i)}$, and again give the cardinalities of their parts in a vector $\ourvec{v}^{(i+c)}$.
In the production matrix approach, the relation between $\ourvec{v}^{(i)}$ and $\ourvec{v}^{(i+c)}$ is encoded in an $n \times n$ \emph{production matrix} $A \in \mathbb{Q}_0^{n \times n}$ such that $\ourvec{v}^{(i+c)} = A \ourvec{v}^{(i)}$.
In this way, we obtain the number of graphs on $n$ vertices in $\ourvec{v}^{(n)} = A^j\ourvec{v}^{(n_0)}$ from the graphs on a constant number $n_0$ of vertices, with $j = (n - n_0)/c$.
In this paper we show how that can be done for crossing-free geometric graphs on point sets with a particular structure.

We focus on obtaining an asymptotic lower bound on the number of crossing-free graphs. 
To that end, we obtain the corresponding production matrix~$A$, and apply the Perron--Frobenius theorem to obtain a lower bound on the elements of $A^j$ when $j$ tends to infinity, by approximating the largest eigenvalue of the matrix.
This gives us a lower bound on the number of crossing-free graphs on such a point set.

For points in convex position, generating trees have been described for triangulations~\cite{treeOfTriangulations}, spanning trees~\cite{treeOfTrees}, and very recently for a few other crossing-free graphs~\cite{production_matrices_geometric_graphs,guillermo}.
They are also the basis of the ECO method~\cite{eco_survey}.
The term \emph{production matrix} was introduced in~\cite{deutsch}, although the equivalent term \emph{AGT matrix}~\cite{MeVe} is sometimes used.
In a recent paper together with Seara~\cite{characteristic_polynomials}, we already studied characteristic polynomials of production matrices for various classes of geometric graphs, which can give rise to new relations between well-known combinatorial objects.
Asinowski and Rote~\cite{asinowsi_matchings} use similar matrices to bound the number of crossing-free perfect matchings of point sets; in fact, 
the classes of point sets they consider are the same as ours.

Indeed, the work by Asinowski and Rote~\cite{asinowsi_matchings} is closely related to ours, both in the classes of point sets considered, as well as in the counting methods.
In their paper, they use various methods for counting crossing-free perfect matchings on particular point sets; in Section~5 of their work, they obtain a sequence of infinite vectors whose elements are the number of matchings on $n$ vertices, partitioned by the number of unmatched points (i.e., points that still have to be matched).
They devise an infinite band matrix~$A$ to obtain this sequence of vectors.
Inspired by the Perron--Frobenius theorem (which we will use for our fixed-size matrices), they show that the growth rate for perfect matchings is equal to the column sum of~$A$ after stabilization.
This way, they also show that the bound is tight for matchings on that class of point sets.
While our production matrices can also be considered infinite, we will eventually consider constant-size matrices (and thus only obtain a subset of all possible crossing-free graphs), in order to obtain a lower bound using the Perron--Frobenius theorem.

For many classes of crossing-free graphs, it is known that their number is minimal when the points are in convex position~\cite{number_plane_geometric,convexity_pseudo_triangulations}.
A remarkable exception are triangulations, where it is conjectured that so-called \emph{double circles} are the minimizing point set class~\cite{bound_double_circle};
the current best bound~\cite{lower_bound_triangulations} is, however, far from the number of triangulations of the double circle.
Much less seems to be known about point sets that maximize the number of graphs.
See the online list by Sheffer~\cite{sheffer_webpage} for current bounds on these numbers for various graph classes.


\paragraph{Outline.} 
We begin in Section~\ref{sec:tutorial} by introducing the production matrix technique with an example, i.e., counting the number of plane graphs on points in convex position.
In the following section we define the family of point sets used to obtain our improved lower bound, the generalized double zig-zag chain.
In Section~\ref{sec:counting} we provide production matrices to count sub-graphs in one part of the point set;
the sub-graphs in the remaining part are counted by generalizing a previously-known technique.
In Section~\ref{sec:bounds_eigenvalue}, we argue that bounds on the Perron roots of the matrices give us a lower bound on the  number of crossing-free graphs, leading to our main result.

\section{Warm-up for points in convex position}\label{sec:tutorial}

The first stepping stone of our counting method will be a way to map a graph on $i+1$ vertices to one on $i$, and vice versa. 
Recall that  in this paper we only consider crossing-free graphs.
We will assume point sets are given as sequences $p_1, \dots, p_n$, where each $p_i$ ($1 \leq i \leq n$) is a vertex.

Consider any $(i+1)$-vertex graph~$G$ drawn on vertices $p_1$ to $p_{i+1}$.
We can associate $G$ to a graph $G'$ with $i$ vertices by replacing every edge $p_j p_{i+1}$ by the edge $p_j p_i$ for all $1 \leq j \leq i$ (discarding duplicates and loops).
The graph $G'$ that we obtain is called the \emph{parent} of~$G$.
In the other direction, we can select some edges incident to $p_i$ in $G'$ and replace them by edges incident to $p_{i+1}$ in a way that $G'$ is the parent of the new graph $\tilde G$, and such that $\tilde G$ is crossing-free.
We say that $G'$ \emph{produces} $\tilde G$, and that the edges incident to $p_{i+1}$ that are mapped to edges of $G'$ are \emph{inherited} from~$G'$.
The degree of $p_i$ in $G'$ determines how many graphs can be produced from it.

In our construction, we will refer to $p_i$ as the root vertex, and we will use a vector $\ourvec v^{(i)}$ to store the number of graphs with root vertex $p_i$ of degree $j$, for $0 \leq j \leq n$.
The relation between a graph on $n$ vertices and those that can be produced from it defines an implicit generating tree. 
For our purposes, we do not need the tree explicitly, but are only interested in counting how many graphs can be produced from another one.

To introduce the production matrix approach, we will show how to obtain the known lower bound for the number of graphs on $n$ points in convex position.
For point sets in convex position, the number of plane graphs was shown by Flajolet and Noy~\cite{FN} to be $\Theta((6+4\sqrt{2})^n/n^{3/2})$~\cite[Theorem~4]{FN}, which is in $\Omega(11.65^n)$.

The overall approach will be to derive a production matrix $C$ to count the number of graphs on $i+1$ points in convex position, based on those on $i$ points.
Once $C$ is found, we will apply the Perron-Frobenius theorem to show that the largest eigenvalue of $C$ gives a lower bound on the base of the expression for the asysmptotic number of graphs on $n$ points in convex position. As an illustration, matrix $C$ for $n=6$ has the following shape:
\begin{equation}\label{eq:matrix_convex}
\displaystyle C= \left(\begin{array}{rrrrrrrr}
1 & 1 & 1 & 1 & 1 & 1\\
1 & 2 & 2 & 2 & 2 & 2\\
0 & 1 & 2 & 2 & 2 & 2\\
0 & 0 & 1 & 2 & 2 & 2\\
0 & 0 & 0 & 1 & 2 & 2\\
0 & 0 & 0 & 0 & 1 & 2
\end{array}\right) \enspace.
\end{equation}


We have $n$ points $p_1, \dots, p_n$ in convex position, indexed from $1$ to $n$ in, say, clockwise order along the convex hull boundary.
Note that we can always add an edge of the convex hull to any graph without introducing any crossings.
Hence, we will focus on counting plane graphs without edges on the convex hull boundary, and at the end multiply their number by $2^n$, accounting for all possibilities of adding such edges (recall that we are considering labeled graphs).

Each graph on $i+1$ vertices is mapped to its unique parent graph on $i$ vertices that is obtained by identifying $p_{i+1}$ with $p_i$, and possibly deleting the edge $p_{i-1} p_i$ that is now on the convex hull boundary and stems from $p_{i-1} p_{i+1}$ (see \fig{fig:convex_chain}~(left)).
Note that apart from this edge, there is only one other possibility for the number of edges to be less in the parent graph than in the original one.
This happens when there are both the edges $p_k p_i$ and $p_k p_{i+1}$, which are mapped to the same edge (which is shown in \fig{fig:convex_chain}~(right)).
Observe that, since the graph is crossing-free, there is no other pair of such edges $p_{k'} p_i$ and $p_{k'} p_{i+1}$.

\begin{figure}[tb]
\centering
\includegraphics[page=3]{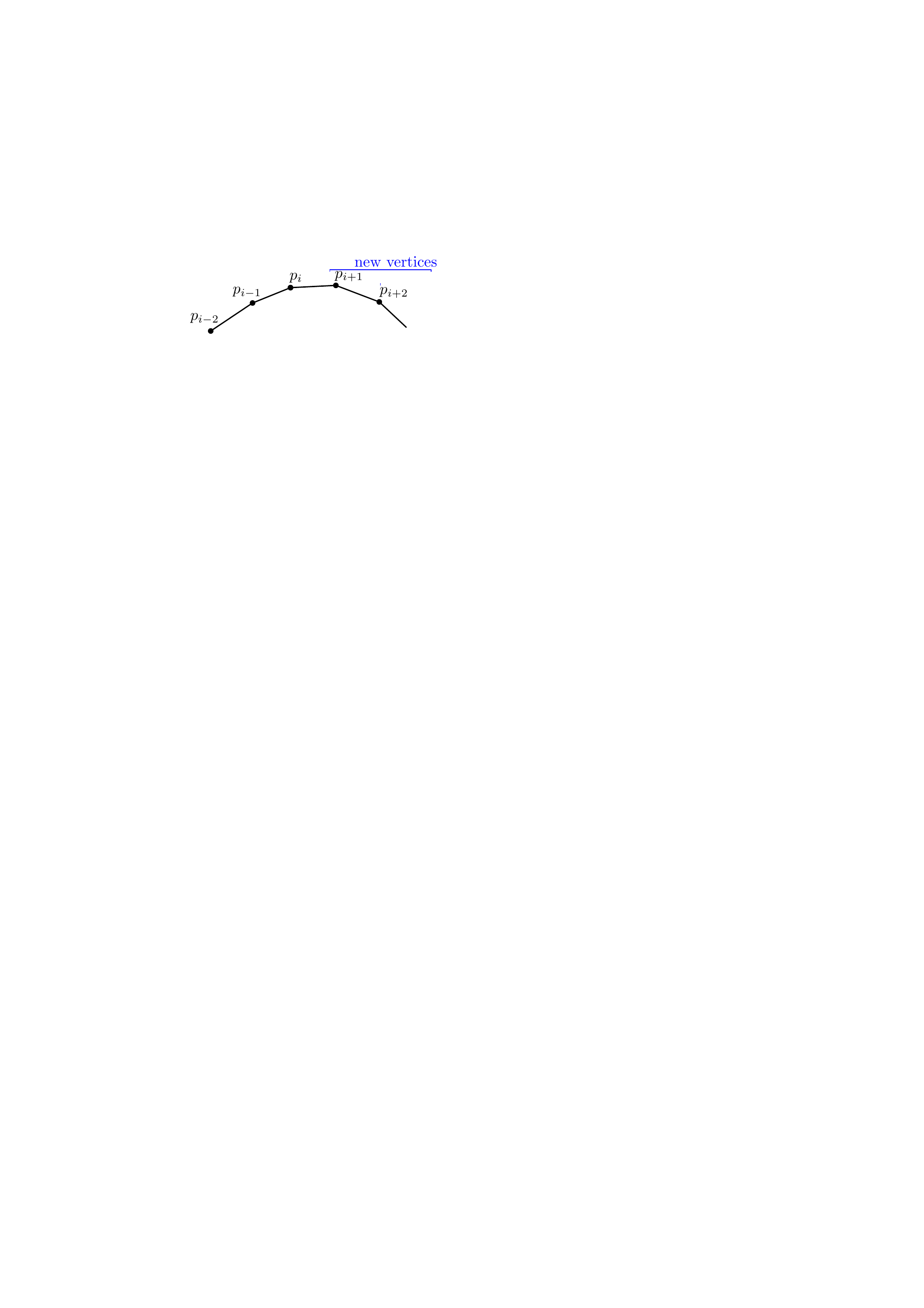}
\includegraphics[page=2]{convex_chain}
\caption{Part of a convex chain.
Left: Vertex $p_{i+1}$ has degree $k$ (in this example, $k=3$) and the graph is obtained from one where $p_i$ has degree $k-1$.
This requires the presence of edge $p_{i-1} p_{i+1}$; $p_{i+1}$ inherits all edges incident to $p_{i}$.
Right: Vertex $p_{i+1}$ has degree $k$ (with $k=2$ here) and the graph is obtained from one where $p_{i}$ has degree at least $k$.
In this case, $p_{i+1}$ inherits $k$ edges, and the last inherited edge may be duplicated and remain incident to $p_i$.}
\label{fig:convex_chain}
\end{figure}

Let us now translate this relation to a production matrix.
Suppose we are given the vector $\ourvec{v}^{(i)}$ that contains the number of graphs partitioned by their degree at vertex $p_i$.
For example, we can start with $\ourvec{v}^{(1)} = (1, 0, 0, \dots)^T$.
Now we want to obtain the vector $\ourvec{v}^{(i+1)} = C \ourvec{v}^{(i)}$ by finding an appropriate matrix~$C$.
The $j$th row of $C$ is thus used to produce the number of graphs  where  $p_{i+1}$ has degree $j-1$.
Next we derive the shape of the different rows of $C$.

\paragraph{First row.}
The number of plane graphs where $p_{i+1}$ has degree $0$ is equal to the number of graphs counted in $\ourvec{v}^{(i+1)}$. This gives a first row of $1$s in the matrix~$C$.

\paragraph{Second row.}
If $p_{i+1}$ has degree 1, there are two ways in which $p_{i+1}$ could have been added to the graph.
If the degree of $p_{i}$ is 0, we can add the edge $p_{i-1} p_{i+1}$, and we get a one in the first column of the second row.
Otherwise, $p_{i}$ has degree at least 1, and $p_{i+1}$ can inherit one edge from  $p_{i}$.
Moreover, there is the option of keeping (a copy of) the inherited edge incident to $p_{i}$ without creating any crossing.
In total, for each graph in which $p_{i}$ has degree at least one, that gives two ways for making $p_{i+2}$ have degree 1.
Thus, the rest of the row is made of~$2$s.

\paragraph{Other rows.}
The following rows are analogous, shifted by one column every time:
There are two ways for $p_{i+1}$ to have degree $k$.
Either $k$ edges are inherited from $p_{i+1}$, for which the minimum degree for $p_{i}$ needs to be $k$;
since we can always choose to keep the last inherited edge incident to $p_{i+1}$, we get 2 options every time (cf.~\fig{fig:convex_chain}~(left)).
Otherwise, $p_i$ needs to have exactly $k-1$ edges, which are inherited by $p_{i+1}$;
by adding the edge $p_{i-1} p_{i+1}$, the degree of $p_{i+1}$ becomes~$k$  (see \fig{fig:convex_chain}~(right)).
This results in matrix~$C$ in (\ref{eq:matrix_convex}).

For $n$ vertices, we need that the size of $C$ is at least $n$, and then we can obtain $\ourvec{v}^{(n)} = C^{n-3} \ourvec{v}^{(3)}$.
%
Once the production matrix $C$ is found, the final step is to apply the Perron-Frobenius theorem to obtain an asymptotic lower bound.
Suppose matrix $C$ has some fixed size $m$.
Computing the vector $C^n \ourvec{v}^{(c)}$ gives a vector $\ourvec{v}^{(c+n)}$ with the number of graphs produced after $n$ iterations (i.e., graphs on $n+c$ vertices), with the additional restriction that no graphs with a root vertex of degree at least $m-1$ are produced in this process (which becomes relevant as soon as $m \geq n$).
Nevertheless, the vector $\ourvec{v}^{(n+c)}$ gives the number of a subset of all crossing-free graphs on $n$ vertices (we basically look at a sub-tree of the generating tree).
Therefore, the first entry of $C^n$ gives a lower bound on the total number of graphs.

Now, by the Perron-Frobenius theorem  (see, e.g.,~\cite{meyer}), this number is in $\Omega(r^n)$, where $r$ is the largest eigenvalue of~$C$.
In Section~\ref{sec:bounds_eigenvalue}, we will elaborate in more detail on the actual number of graphs produced by a matrix of fixed size, as well as on the conditions on a matrix to apply the Perron-Frobenius theorem.
In any case, the bottom line is that a lower bound on the number of non-crossing graphs on $n$ vertices can be obtained by choosing $C$ to be reasonably large and computing its largest eigenvalue, either exactly or numerically using mathematical software.

%

\section{Generalized double zig-zag chains}
In this section, we describe the classes of point sets that we will investigate and provide a more detailed outline of our general counting approach.

\begin{figure}
\centering
\includegraphics{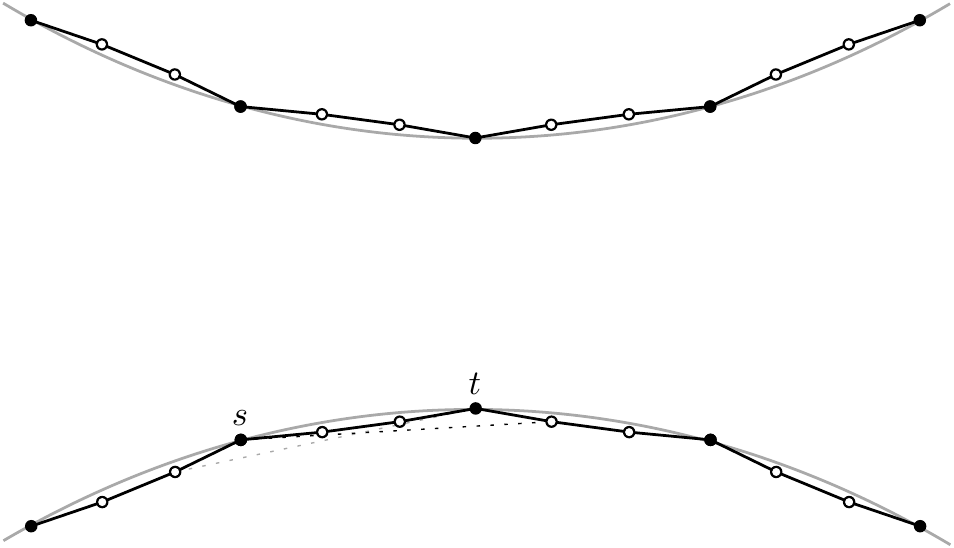}
\caption{A generalized double-zig-zag chain $Z_2$.
The arcs for the construction are gray, the solid edges are not crossed by any segment between two points of the set.
The points placed on the arcs are  black, while the $k=2$ inner points of the pockets are white. 
Any two consecutive black point, such as $s$ and $t$, together with the inner points between them, form one pocket.
The points of the pocket between $s$ and $t$ are above the dotted lines;
This implies that there is no edge between two points of the set that crosses the edge between two consecutive points.}
\label{fig:gdzzc}
\end{figure}

\subsection{The generalized double-zig-zag chain}\label{sec:def_chain}
The construction that we will analyze, and that will allow us to improve on the existing lower bound on the number of crossing-free graphs, is the \emph{generalized double-zig-zag chain}, illustrated in \figurename~\ref{fig:gdzzc}.
It is a family of point sets parameterized by two values $n$ and $k$, where $n$ is the total number of points, and $k$ defines the size of certain re-occurring parts.
We make this more precise in the following paragraph.
(For our construction, we will sloppily refer to that class like to a single point set, as the number of graphs is the same for all sets of that class.)

Let $Z_k$ be a set of $n= 2z$ points with $z \equiv 1 \pmod {(k+1)}$ that is arranged in the following way.
Consider two $x$-monotone circular arcs facing each other as in Fig. \ref{fig:gdzzc}, such that each point on one arc can \emph{see} each point on the other arc (where two points can see each other if the interior of the line segment connecting them does not intersect one of the arcs).
On each arc, we place $\ceil{z/(k+1)}$ points (shown black in the figure).
Consider the segment between two consecutive such points $s$ and $t$ on the lower arc.
We now place a ``flat'' circular arc between $s$ and $t$ with circle center above the arc, and place $k$ points on it (shown white in the figure);
here, flat means that the $k$ points are above any line through $s$ or $t$ and any other point of the construction that is below the line $st$.
We call the group formed by $s$, $t$, and the $k$ points in-between them a \emph{pocket} (with $k$ inner points).
We place $k$ such points between each pair of consecutive points of the lower arc (obtaining the \emph{lower chain}), and also in an analogous way on the upper arc (resulting in the \emph{upper chain}).
The example in \figurename~\ref{fig:gdzzc} shows $Z_2$, where each pocket has two inner points (i.e, $k=2$). 

We label the points along the lower arc, including pockets, from left to right, $p_1, \dots, p_z$, and those on the upper arc $q_1, \dots, q_z$.

\emph{Double chains} (i.e., $k=0$) have provided several of the current best bounds for various settings in the investigation of geometric graphs~\cite{alfredo_lower_bounds,anna}.
In particular, the number of crossing-free graphs on a double chain is in~$\Omega^*((20+14\sqrt{2})^n)$~\cite[Section~5]{alfredo_lower_bounds}.%
\footnote{As customary for exponential bounds, we use the extension of the Landau notation by $O^*(f(n))$, $\Theta^*(f(n))$, and $\Omega^*(f(n))$, in which polynomial factors are omitted; e.g., we have $2^nn^2 \in \Theta^*(2^n)$.}
(This bound was shown to be tight in~\cite{number_plane_geometric}.)
The generalization to the double zig-zag chain (i.e., $k=1$) was devised in~\cite{plane_geometric_soda,number_plane_geometric} to obtain improved bounds on the number of crossing-free graphs and triangulations.
(For $k=1$, each of the chains can be considered as a \emph{double circle}, a point configuration that is conjectured to have the fewest number of triangulations~\cite{bound_double_circle}.)
Generalizations to larger pocket sizes allowed for improving the bound for triangulations~\cite{bounds_multiplicity} and perfect matchings~\cite{asinowsi_matchings}.

\subsection{Counting strategy}
First we observe that the segment between any two consecutive points $p_i p_{i+1}$ is not crossed by any other segment between two points of the set, and thus can co-exist with any other edge in a crossing-free graph.
This also holds for the two edges on the convex hull boundary between the two chains.
For this reason, these edges will be disregarded first in our counting, and will be considered in the end by multiplying the number of graphs not having any such edge by a factor of $2^n$;
this is the number of ways to add these edges.

Therefore, in the next section we will split the counting into two parts.
On the one hand, we will count the graphs with edges below the path $(p_1, \dots, p_z)$ (and, symmetrically, those above the path $(q_1, \dots, q_z)$) called the \emph{outer part}.
On the other hand, we will count the edges that connect vertices of the two paths, which are in the \emph{inner part}.

Our counting will be on~$Z_k$ for $2 \leq k \leq 6$.

\section{Counting for the outer and inner parts}
\label{sec:counting}
In order to count the number of crossing-free graphs in the outer and inner parts of~$Z_k$, we will derive production matrices for them.
We begin with the outer part, for which we will present a matrix that counts the exact number of graphs.
For the inner part, we derive our values by estimating polynomial coefficients, similar to~\cite{number_plane_geometric}.
There are $n$ edges that separate the two parts (connecting consecutive vertices of the chains and the two edges on the convex hull boundary between the chain).
As the number of possibilities to add (or not) such edges to a plane graph is~$2^n$, we will not consider these edges here, and will multiply the resulting bound by $2^n$ in the end.

\subsection{Outer part}
In this section we deduce matrices to count the number of plane graphs with edges below the path $(p_1, \dots, p_z)$, as in \figurename~\ref{fig:almost_convex_chain}.
Recall that a chain is composed of a series of pockets; each pocket with $k$ inner points forms a chain on $k+2$ vertices.
During the explanation, we will consider this path defining a polygon;
the first and the last vertex of a pocket are thus convex vertices of the polygon, and the other vertices are reflex.
The first (say, with smallest index) reflex vertex is called the \emph{leading} vertex of the chain.
We call all other vertices \emph{regular}.

We will present a matrix to count the number of plane graphs in the outer part after adding one whole pocket.
This matrix will be the product of several matrices, one related to each new vertex of the pocket.
For instance, for $k=2$, we will have one related to each of $p_{i+1},p_{i+2},p_{i+3}$ (where $p_i$ is the last vertex of the previous pocket, or, equivalently, the first vertex of the current pocket).

\begin{figure}[tb]
\centering
\includegraphics[page=1]{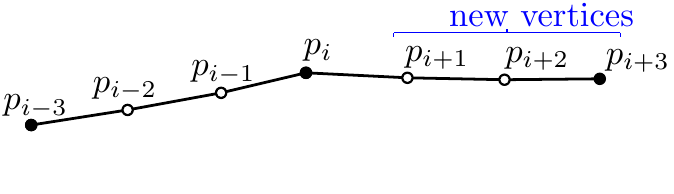}\\
\includegraphics[page=7]{almost_convex_chain2}
\caption{Top: Part of an almost convex chain with two inner vertices (i.e., $k=2$).
Vertices $p_{i-2}$ and $p_{i+1}$ are leading vertices. The other vertices are regular.
Middle: Since $p_{i+2}$ is a regular vertex, any edge incident to $p_{i+2}$ present in a plane graph can be obtained by inheriting an edge from the previous vertex $p_{i+1}$.
The example shows $p_{i+2}$ inheriting two edges from $p_{i+1}$ (bottom).
The last inherited edge may also be kept at $p_{i+1}$ without influencing the degree of~$p_{i+2}$.}
\label{fig:almost_convex_chain}
\end{figure}


\subsubsection{Matrix for regular vertices}
For simplicity, we present the following for $k=2$, but it works in the same way for larger sizes.
Consider a regular vertex like $p_{i+2}$ (refer to \figurename~\ref{fig:almost_convex_chain}).
Our goal is to find a matrix $R$ such that $\ourvec{v}^{(i+2)} = R \ourvec{v}^{(i+1)}$.
\paragraph{First row.}
The plane graphs where $p_{i+2}$ has degree 0 are equal to all the graphs counted in $\ourvec{v}^{(i+1)}$. This gives a first row of $1$s in the matrix $R$.

\paragraph{Second row.}
If $p_{i+2}$ has degree 1, it needs to inherit one edge from $p_{i+1}$.
If the degree of $p_{i+1}$ is 0, this is not possible, thus we get a zero in the first column of the second row.
As soon as $p_{i+1}$ has degree at least 1,  $p_{i+2}$ can inherit one edge from  $p_{i+1}$.
Moreover, there is the option of keeping (a copy of) the inherited edge incident to $p_{i+1}$ without creating any crossing.
In total, for each graph in which $p_{i+1}$ has degree at least one, that gives two ways for making $p_{i+2}$ have degree 1.
Thus  the rest of the row is made of $2$s.

\paragraph{Other rows.}
The following rows are analogous, shifted by one column every time: in order for $p_{i+2}$ to have degree $k$, $k$ edges need to be inherited from $p_{i+1}$, thus the minimum degree for $p_{i+1}$ is $k$.
Since we can always choose to keep the last inherited edge incident to $p_{i+1}$, we get 2 options every time.

This results in matrix~$R$, which is given here for $n=6$:
\begin{equation}\label{eq:R_outer}
\displaystyle R=\left(\begin{array}{rrrrrrrr}
1 & 1 & 1 & 1 & 1 & 1\\
0 & 2 & 2 & 2 & 2 & 2\\
0 & 0 & 2 & 2 & 2 & 2\\
0 & 0 & 0 & 2 & 2 & 2\\
0 & 0 & 0 & 0 & 2 & 2\\
0 & 0 & 0 & 0 & 0 & 2
\end{array}\right) \enspace .
\end{equation}

Exactly the same matrix applies to $p_{i+3}$, and to all other regular vertices when $k>2$.


\subsubsection{Matrix for leading vertices}

Leading vertices like $p_{i+1}$ in \figurename~\ref{fig:almost_convex_chain} require a different approach, as there are edges incident to $p_{i+1}$ that cannot be obtained by inheriting from $p_i$;
that is, edges $p_{i+1}p_{i-1}$, $p_{i+1}p_{i-2}$, $p_{i+1}p_{i-3}$, as $p_{i}p_{i-1}$, $p_{i}p_{i-2}$, $p_{i}p_{i-3}$ are not in the outer part.
(Recall, however, that any edge from $p_i$ to the vertices $p_1, \dots, p_{i-4}$ is completely contained in the outer part.)

In general, for pockets with $k$ inner points,
we partition the graphs depending on which edges connect $p_{i+1}$ to $p_{i-k-1}, \dots, p_{i-1}$.

\begin{figure}[tb]
\centering
\includegraphics[page=8]{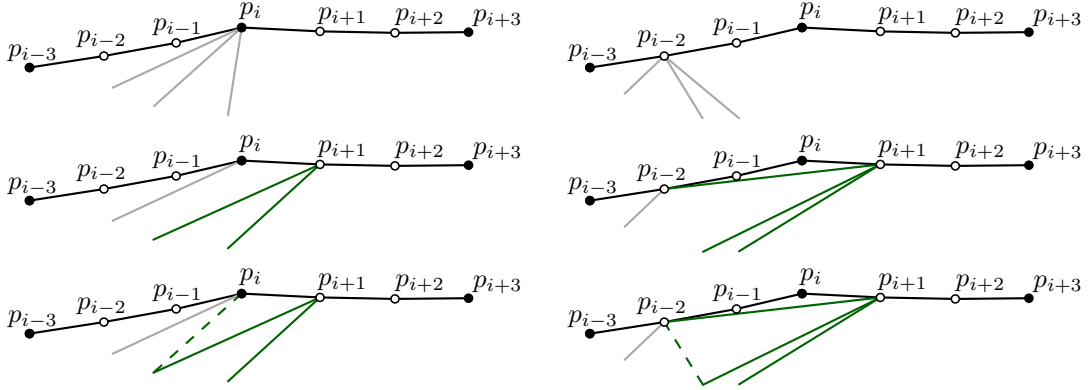}
\caption{
Computing leading vertices on $Z_2$.
Left: When edges $p_{i+1}p_{i-1}$, $p_{i+1}p_{i-2}$ and $p_{i+1}p_{i-3}$ are not included, $p_{i+1}$ can inherit edges from $p_i$. The example shows $p_{i+1}$ inheriting two edges from $p_i$. The last inherited edge (dashed) may be kept without influencing the degree of $p_{i+1}$.
Right: We distinguish cases on which of the edges $p_{i+1}p_{i-1}$, $p_{i+1}p_{i-2}$ or $p_{i+1}p_{i-3}$ are included.
In the example $p_{i+1}p_{i-2}$  is included, and $p_{i+1}$ inherits two edges from $p_{i-2}$.
The dashed edge can be optionally kept. Note that in this case, $p_{i+1}$ cannot inherit any edge from $p_i$.}
\label{fig:almost_convex_chain_case1_2}
\end{figure}

\begin{itemize}
\item \textbf{No edges from $p_{i+1}$ to any of $p_{i-k-1}, \dots, p_{i-1}$.}
All these graphs can be produced by inheriting edges from $p_i$ like for regular vertices, i.e., by applying matrix~$R$; see \fig{fig:almost_convex_chain_case1_2}, left.
\item \textbf{First warm-up: One edge from $p_{i+1}$ to $p_{i-1}$ (but none to $p_{i-k-1}, \dots, p_{i-2}$).}
Such graphs can be produced by inheriting edges from $p_{i-1}$.
The vector for $p_{i-1}$ can be obtained by applying $R^{-1}$ to the one for $p_i$.
(Note that $R$ is invertible, as it is a triangular matrix with non-zero diagonal entries.)
However, inheriting cannot be done in the same way as from $p_i$, as the edge $p_{i+1} p_{i-1}$ increases the degree of $p_{i+1}$ by one.
This increase can be captured by shifting the entries of matrix $R$ vertically one row (i.e., we obtain no graphs with degree $0$, as they have been counted in the previous case, we get exactly one graph of degree one for every parent graph, etc.).
This shift of $R$ is obtained by multiplying it with the matrix $S$ (again given for $n=6$):
\begin{equation}\label{eq:S_outer}
\displaystyle S=\left(\begin{array}{rrrrrrrr}
0 & 0 & 0 & 0 & 0 & 0\\
1 & 0 & 0 & 0 & 0 & 0\\
0 & 1 & 0 & 0 & 0 & 0\\
0 & 0 & 1 & 0 & 0 & 0\\
0 & 0 & 0 & 1 & 0 & 0\\
0 & 0 & 0 & 0 & 1 & 0
\end{array}\right)\enspace .
\end{equation}
The number of graphs with exactly the edge $p_{i+1} p_{i-1}$ added is thus obtained by multiplying with $SRR^{-1} = S$.%
\footnote{Observe that inheriting from $p_i$ and adding the edge $p_{i-1} p_{i+1}$ are exactly the ``operations'' that we used for points in convex position in Section~\ref{sec:tutorial}.
Indeed, $C = R + S$.
While this line of arguments (using the inverse of $R$) is slightly more involved, it generalizes nicely, as we will see in the next two items.}
\item \textbf{Second warm-up: An edge from $p_{i+1}$ to $p_{i-2}$.}
We can apply the same reasoning as before, inheriting from $p_{i-2}$.
The corresponding vector is obtained by multiplying with $R^{-2}$.
We have the edge $p_{i-2} p_{i+1}$, and count both graphs that do and do not contain the edge $p_{i-1} p_{i+1}$.
For the graphs not containing this edge, we apply the shift matrix $S$ once (see \fig{fig:almost_convex_chain_case1_2}, left), and for the ones containing it, we have to apply it twice (as the degree of $p_{i+1}$ is increased by two).
The graphs in this case are thus obtained by multiplying the vector with $SRR^{-2} + S^2RR^{-2} = (S+S^2)R^{-1}$.
\item \textbf{In general, an edge from $p_{i+1}$ to $p_{i-m}$.}
The reader may by now already have realized the pattern to follow for counting graphs with an edge $p_{i+1} p_{i-m}$.
We obtain the vector at $p_{i-m}$ by applying $R^{-m}$.
Then, edges are inherited by applying $R$, but we have to shift $R$ once to account for the degree increase by the edge $p_{i-1}p_{i+1}$ at $p_{i+1}$.
Then we have to consider the different possibilities for edges between $p_{i+1}$ and $p_{i-m+1}, p_{i-m+2}, \dots$;
in general, when adding $l \leq m$ edges between $p_{i+1}$ and those vertices, we have to add $p_{i-m} p_{i+1}$, and have $m-1$ possibilities for the remaining $l-1$ edges.
We get
\[
\sum_{l=1}^m \binom{m-1}{l-1} S^lR^{1-m}
\]
whenever $p_{i-m}$ is the vertex with the smallest index on that pocket with an edge to~$p_{i+1}$.
\end{itemize}

When summing over all $m$, the sum of the matrices is the matrix giving the number of graphs when adding the leading vertex $p_{i+1}$.
Note that any edge between $p_i$ any of the vertices on that pocket, which also includes $p_{i-(k+1)}$, is not in the outer part, and thus graphs with such an edge must not be considered.

\begin{equation}\label{eq:leading_vertex}
\ourvec{v}^{(i+1)} = \left(R + \sum_{m = 1}^{k+1} \sum_{l=1}^m \binom{m-1}{l-1} S^lR^{1-m}\right)\ourvec{v}^{(i)}
\end{equation}

\subsubsection{Putting things together }
The final production matrix for the outer part is obtained by combining matrices $R$ and $S$ accordingly.
For each of the regular vertices it is enough to multiply the previous vector by $R$.
For the leading vertex we use the expression in (\ref{eq:leading_vertex}).
Thus the final combined matrix for the outer part, for pockets with $k$ inner points, is
\begin{equation}\label{eq:pocket}
P = R^k\left(R + \sum_{m = 1}^{k+1} \sum_{l=1}^m \binom{m-1}{l-1} S^lR^{1-m}\right) \enspace,
\end{equation}
and we have $\ourvec{v}^{(i+k+1)} = P \ourvec{v}^{(i)}$.
We again emphasize that multiplying by this matrix accounts for adding a whole pocket (not only for adding a single vertex, like in our introductory example).

Here, we give two instances of $P$ with $k=2$, one of size six and one of size eight.

\[
P_6 = \left(
\begin{array}{cccccc}
 32 & 6 & 26 & \frac{155}{4} & \frac{207}{4} & \frac{271}{4} \\
 48 & 8 & 40 & 62 & 85 & 114 \\
 20 & 4 & 20 & 35 & 52 & 75 \\
 4 & 3 & 9 & 16 & 27 & 44 \\
 0 & 1 & 4 & 6 & 10 & 21 \\
 0 & 0 & 1 & 2 & 2 & 6 \\
\end{array}
\right)
\]

\[
P_8 = \left(
\begin{array}{cccccccc}
 32 & 6 & 26 & 42 & 62 & \frac{327}{4} & \frac{403}{4} & \frac{491}{4} \\
 48 & 8 & 40 & 68 & 104 & 140 & 175 & 216 \\
 20 & 4 & 20 & 40 & 68 & 97 & 126 & 161 \\
 4 & 3 & 9 & 20 & 40 & 62 & 85 & 114 \\
 0 & 1 & 4 & 9 & 20 & 35 & 52 & 75 \\
 0 & 0 & 1 & 4 & 9 & 16 & 27 & 44 \\
 0 & 0 & 0 & 1 & 4 & 6 & 10 & 21 \\
 0 & 0 & 0 & 0 & 1 & 2 & 2 & 6 \\
\end{array}
\right)
\]

So far, we did not discuss the start vector in detail.
Indeed, it will not influence the asymptotic lower bound on the number of graphs.
Still, it is crucial for a valid reasoning, in particular when considering the use of the inverse matrix in~(\ref{eq:leading_vertex}).
However, its structure is simple.
As there are no edges in the outer part between the first $k+1$ vertices of a chain, we can set $\ourvec{v}^{(k+1)} = (1, 0,\dots,0)^T$, denoting that there is a single graph with no edges.

\subsection{Inner part}\label{sec:entropy}
The number of graphs on the inner part can be bounded by an approach which is a generalization of the one presented in~\cite{plane_geometric_soda} that was also used in~\cite{bounds_multiplicity}.
Consider one of the two chains of $Z_k$, which has $z = \frac{n}{2}$ vertices.
Then the number of pockets on the chain is $\frac{z+1}{k+1}$, where each pocket forms a convex chain of $k+2$ elements.
The main idea of the approach is to count the possibilities of putting edges between vertices of a pocket;
we say that such an edge is \emph{covering} the vertices behind them, meaning that covered vertices cannot see any vertex of the other chain.
For the remaining edges, the set of non-covered vertices behaves like a double chain, for which the number of plane graphs is known.
The number of vertices covered will depend on parameters, $\alpha, \beta, \dots$, which we will then optimize.
Let us first formalize the term of covering vertices of a pocket.
An edge $p_j p_{j+l+1}$ between two vertices of a single pocket \emph{covers} the vertices $p_{j+1}, \dots, p_{j+l}$.
That is, if an edge covers $l$ vertices of a pocket, it forms a convex $(l+2)$-gon with the covered vertices.
See \fig{fig:tapped4cup}.

\begin{figure}
\centering
\includegraphics{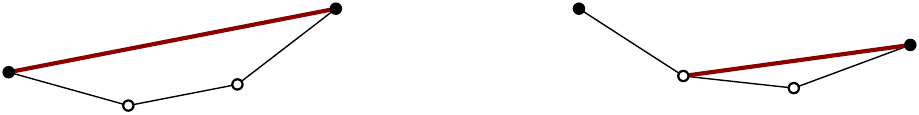}\\
\caption{Left: A pocket with an edge covering all its inner vertices.
The brown thick edge ``hides'' the inner two vertices from the vertices of the other chain.
Right: One vertex of a pocket is covered.}\label{fig:tapped4cup}
\end{figure}

We provide the main idea of our counting strategy for small pocket sizes.
For $Z_2$, the number of pockets of a chain is $\frac{z+1}{3}= \frac{n+2}{6}$.
To construct a plane graph, we first choose $\alpha \frac{z+1}{3} = \alpha \frac{n+2}{6}$ pockets from the chain and cover them.
(In particular, $\alpha$ is a rational number between 0 and 1 such that $\alpha \frac{n+2}{6}$ is an integer;
eventually, we will use $\alpha$ and similar parameters to provide an asymptotic bound, which does not change by rounding as we consider $n$ to be large.)
Then there are three ways to get a plane graph in the convex polygon defined by an edge covering two vertices of a pocket.
This gives us a factor of
\begin{equation}\label{eq:num_in_cups}
{\binom{\frac{n+2}{6}}{\alpha \frac{n+2}{6}}} 3^{\alpha (n+2)/6} \enspace .
\end{equation}

Of the remaining $\frac{n+2}{6} - \alpha \frac{n+2}{6}$ pockets, we choose $\beta \frac{n+2}{6}$ pockets, $0 < \beta < 1$, and cover one point.
(Again, we consider all expressions involving $\alpha$ and $\beta$ to be integers by the choice of the values.)
There are two ways to choose the covered point for each of these pockets, as they are convex chains on four vertices.
This gives us a factor of
\begin{equation}\label{eq:num_covering_one}
{\binom{(1-\alpha) \frac{n+2}{6}}{\beta \frac{n+2}{6}}} 2^{ \beta (n-2)/6}\enspace .
\end{equation}

To get the asymptotic behavior of these factors, we estimate a binomial coefficient using the fact
\[
\frac{2^{H(\lambda)n}}{n+1} \leq {\binom{n}{\lambda n}} \leq {2^{H(\lambda)n}} \enspace ,
\]
where $H(x)$ is the entropy function
\[
H(x)=-x \log_2{(x)} - (1-x)\log_2{(1-x)} \enspace ,
\]
and $\lambda n$ is an integer between 0 and $n$~\cite[Lemma~9.2]{probability_and_computing}.
This implies
\[
{\binom{\mu (n+2)}{\lambda (n+2)}} \in \Theta^*\left (2^{H\left(\frac{\lambda}{\mu}\right) \mu n}\right ) \enspace .
\]
Combining (\ref{eq:num_in_cups}) and (\ref{eq:num_covering_one}) with the entropy function, we get
\begin{equation}\label{eq:num_coverings_approx}
\Theta^*(2^{H(\alpha)n/6} 2^{\log(3)\alpha n/6} \cdot
2^{H(\beta/(1-\alpha))(1-\alpha)n/6} 2^{\beta n/6})
\end{equation}
possibilities for covering the given number of points of one chain in the described way.

Then, the number of points of the chain which are not covered is
\[
z' = \frac{n}{2}- 2\alpha\frac{n+2}{6} - \beta \frac{n+2}{6} = \frac{n}{6}\left(3- 2\alpha - \beta \right) - \frac{2\alpha + \beta}{3} \enspace .
\]

The number of graphs having only edges between the chains is thus the same as for the double chain.
A double chain of $2z'$ vertices has $\Omega^*\left(\left(\frac{10+7\sqrt{2}}{3+2\sqrt{2}}\right)^{2z'}\right)$ such plane graphs~\cite[Section~5]{alfredo_lower_bounds}.
This is obtained by dividing the total number of graphs (see Section~\ref{sec:def_chain}) by the number of graphs on two convex polygons of size~$n/2$ (recall Section~\ref{sec:tutorial}).
By multiplying this twice with (\ref{eq:num_coverings_approx}) (which gives the possibilities for one chain), we get an asymptotic lower bound on the number of plane graphs for our two parameters.
Numerically maximizing over these parameters (yielding $\alpha \approx 0.1396304$ and $\beta \approx 0.3178$) 
gives us a lower bound on the number of plane graphs in the inner part of $Z_2$ of
\[
\Omega(4.18611^n) \enspace .
\]
(Note since the base is rounded, we can again use the classic Landau notation and do not need to consider polynomial factors.)
While it may seem very wasteful to consider only the graphs with a fixed fraction of pockets covered in a certain way, observe that, as $k$ is a constant, the number of different proportions among the covering types is only polynomial in $n$;
thus the dominating factor is the number of graphs for the best choice of $\alpha$ and $\beta$.

In the same way, we obtain bounds for $Z_k$ with larger $k$.
The tedious part of this computation is to obtain the number of possibilities for covering a certain number of vertices.
Note that the number of plane graphs of a pocket with $k$ inner points, without its boundary edges, is given by the formula for the number of plane graphs on $k+2$ points in convex position, divided by $2^{k+2}$.
For example, Figures~\ref{fig:cases_tapped_3} and~\ref{fig:cases_tapped_2} give a full enumeration of the cases for $Z_3$.
The numbers for different pocket sizes are summarized in Table~\ref{tbl:coefficients}.
We give the computations for $Z_5$, as this pocket size gives our best bound.

\begin{lemma}\label{lem:inner_z5}
An instance of $Z_5$ with $n$ points has $\Omega(4.6796443062467462506^n)$ crossing-free geometric graphs with edges only in the inner part.
\end{lemma}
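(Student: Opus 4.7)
The plan is to specialize the general counting scheme of Section~\ref{sec:entropy} (carried out there for $Z_2$) to the case $k=5$. A chain of $Z_5$ has $z = n/2$ vertices and therefore contains $\frac{z+1}{k+1} = \frac{n+2}{12}$ pockets, each of which is a convex chain on $k+2 = 7$ vertices. The strategy is: enumerate the ``covering patterns'' of a single pocket; assign patterns to pockets according to rational fractions and estimate the count by entropy; multiply by the double-chain factor on the uncovered vertices; and maximize the resulting exponent numerically.

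For the pocket enumeration, I would list every non-crossing arrangement of covering edges on a single $7$-vertex pocket, together with an internal plane graph inside each convex sub-polygon cut off by such an edge. For each pattern~$i$ I would record the number $c_i$ of inner vertices it hides from the other chain, and a multiplicity $\gamma_i$ equal to the number of distinct pocket-internal crossing-free configurations realizing that pattern (absorbing both the choice of position and the choice of internal plane graphs). The analogous enumeration for $k=3$ is displayed in Figures~\ref{fig:cases_tapped_3} and~\ref{fig:cases_tapped_2}; the resulting coefficients for $k=5$ are collected in Table~\ref{tbl:coefficients}. Each internal sub-polygon count reduces to counting plane graphs on a convex polygon on at most $k+2$ vertices with a prescribed subset of boundary edges absent, which is a direct application of the Flajolet--Noy enumeration.

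For the entropy step, introduce rationals $\alpha_i \ge 0$ with $\sum_i \alpha_i \le 1$, where $\alpha_i \frac{n+2}{12}$ is the number of pockets of a single chain assigned pattern~$i$. The number of ways to make such an assignment and simultaneously pick the internal plane graphs is the multinomial
\[
\binom{(n+2)/12}{\alpha_1 (n+2)/12,\,\dots,\,\alpha_s (n+2)/12} \prod_i \gamma_i^{\alpha_i (n+2)/12},
\]
which by the entropy approximation used for~(\ref{eq:num_coverings_approx}) is of order $\Theta^*(2^{g(\vec\alpha)\, n/12})$ for an explicit function $g$ determined by the $\gamma_i$. Squaring this factor accounts for both chains. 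The number of vertices per chain still visible to the other chain is $z' = \frac{n}{2} - \frac{n+2}{12}\sum_i c_i \alpha_i$, so the inner edges between the two chains behave as on a double chain on $2z'$ vertices and contribute the standard factor $\Omega^*\bigl(\bigl(\frac{10+7\sqrt{2}}{3+2\sqrt{2}}\bigr)^{2z'}\bigr)$ from~\cite[Section~5]{alfredo_lower_bounds}, exactly as in Section~\ref{sec:entropy}. Multiplying all three factors and maximizing the total exponent numerically over $\vec\alpha$ in the admissible simplex produces the base $4.6796443062467462506$ stated in the lemma.

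The main obstacle is the enumeration step: the number of covering patterns grows considerably with $k$, since non-crossing families of covering edges on five inner vertices can be intricate, and for each such family one has to separately count the plane graphs inside every induced convex sub-polygon. Once the coefficients in Table~\ref{tbl:coefficients} are established, the rest of the argument is mechanical entropy bookkeeping followed by a standard constrained numerical optimization, whose output matches the stated constant.
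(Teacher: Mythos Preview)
Your proposal is correct and follows essentially the same approach as the paper: partition pockets by the number of covered inner vertices, use the coefficients from Table~\ref{tbl:coefficients} (which the paper derives via the case analysis in \fig{fig:cases_tapped_5}), apply the entropy estimate to the resulting multinomial, multiply by the double-chain factor on the uncovered vertices, and numerically optimize over the $\alpha_i$. The only cosmetic difference is that you write a single multinomial while the paper unrolls it into a chain of binomials, and the paper additionally records the optimizing values $\alpha_1,\dots,\alpha_5$ explicitly.
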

\begin{proof}
For an instance of $Z_5$ with $n$ points, let $\alpha_i$, be the fraction of pockets in which $i$ vertices are covered.
The number of possibilities for the other numbers of covered vertices are given in \fig{fig:cases_tapped_5}.
There, only the maximal covering edges are shown, the remaining ones are indicated by the factors.
For example, the convex region defined by an edge covering two vertices can have three different graphs (as can also be seen in \fig{fig:cases_tapped_2}).
For larger convex regions, we rely on the known numbers from, e.g., \cite[sequence A054726]{oeis}.
For example, for a 7-gon, there are 25216 graphs;
this number has to be divided by $2^7$ as we do not consider edges on the boundary of the 7-gon, giving 197.
For 5 covered vertices (with 197 possibilities in a pocket), we get
\[
{\binom{\frac{n+2}{12}}{\alpha_5 \frac{n+2}{12}}} 197^{\alpha_5 (n+2)/12} \enspace ,
\]
for four covered vertices (and 121 possibilities)
\[
{\binom{(1-\alpha_5)\frac{n+2}{12}}{\alpha_4 \frac{n+2}{12}}} 121^{\alpha_4 (n+2)/12} \enspace ,
\]
for three covered vertices
\[
{\binom{(1-\alpha_5 - \alpha_4)\frac{n+2}{12}}{\alpha_3 \frac{n+2}{12}}} 52^{\alpha_3 (n+2)/12} \enspace ,
\]
etc.
Using the entropy function, the number of ways for covering the given number of points is in
\begin{multline}\label{eq:num_coverings_approx_5}
\Theta^*(2^{\xi(n)n/12}) \enspace \text{with}\\
\begin{split}
\xi(n) = {} &
H(\alpha_5) + \log(197)\alpha_5 + \\
&H(\alpha_4/(1-\alpha_5))(1-\alpha_5) +  \log(121)\alpha_4 + \\
&H(\alpha_3/(1-\alpha_5-\alpha_4))(1-\alpha_5-\alpha_4) + \log(52)\alpha_3) +\\ &H(\alpha_2/(1-\alpha_5-\alpha_4-\alpha_3))(1-\alpha_5-\alpha_4-\alpha_3) + \log(18)\alpha_2) +\\ &H(\alpha_1/(1-\alpha_5-\alpha_4-\alpha_3-\alpha_2))(1-\alpha_5-\alpha_4-\alpha_3-\alpha_2) + \log(5)\alpha_1))
\enspace .
\end{split}
\end{multline}
The number of points of the chain that are not covered is
\[
z' = \frac{n}{2} - \frac{n+2}{12}\sum_{i=1}^5 i \alpha_i \enspace .
\]
Again, we can consider the remaining vertices as being the ones of a double chain and count the graphs in the inner part.
The overall number is the product with~(\ref{eq:num_coverings_approx_5}).
A numerical optimization of the fractions of the pockets with different numbers of covered vertices gives the following values for the parameters.
\begin{equation*}
\begin{split}
\alpha_5\approx {} & 0.0640442057906801992\\
\alpha_4\approx {} & 0.1343042239402862207\\
\alpha_3\approx {} & 0.1970599318079991059\\
\alpha_2\approx {} & 0.2328939317697186669\\
\alpha_1\approx {} & 0.2208748945673803411
\end{split}
\end{equation*}
From those, we obtain the claimed lower bound.
\end{proof}

\begin{table}
\centering
\begin{tabular}{c|c|c|c|c}
$Z_2$ & $Z_3$ & $Z_4$ & $Z_5$ & $Z_6$ \\
\hline
\begin{tabular}[t]{c|c}
t: & p: \\
\hline
1 & 2\\
2 & 3\\
&\\
&\\
&\\
&
\end{tabular} &
\begin{tabular}[t]{c|c}
t: & p: \\
\hline
1 & 3\\
2 & 7\\
3 & 11\\
&\\
&\\
&
\end{tabular} &
\begin{tabular}[t]{c|c}
t: & p: \\
\hline
1 & 4\\
2 & 12\\
3 & 28\\
4 & 45\\
&\\
&
\end{tabular} &
\begin{tabular}[t]{c|c}
t: & p: \\
\hline
1 & 5\\
2 & 18\\
3 & 52\\
4 & 121\\
5 & 197\\
&
\end{tabular} &
\begin{tabular}[t]{c|c}
t: & p: \\
\hline
1 & 6\\
2 & 25\\
3 & 84\\
4 & 237\\
5 & 550\\
6 & 903
\end{tabular}\\
\hline
$4.18^n$ & $4.39^n$ & $4.55^n$ & $4.67^n$ & $4.77^n$\\
\hline
$41.77^n$ & $42.01^n$ & $42.10^n$ & $42.11^n$ & $42.09^n$
\end{tabular}
\caption{Number of possibilities $p$ to cover $t$ vertices in different generalized double zig-zag chains.
The last-but-one line contains the bound for the graphs in the inner part (numbers rounded down).
The bottom line contains the obtained overall bounds.}
\label{tbl:coefficients}
\end{table}

\begin{figure}
\centering
\includegraphics[page=1]{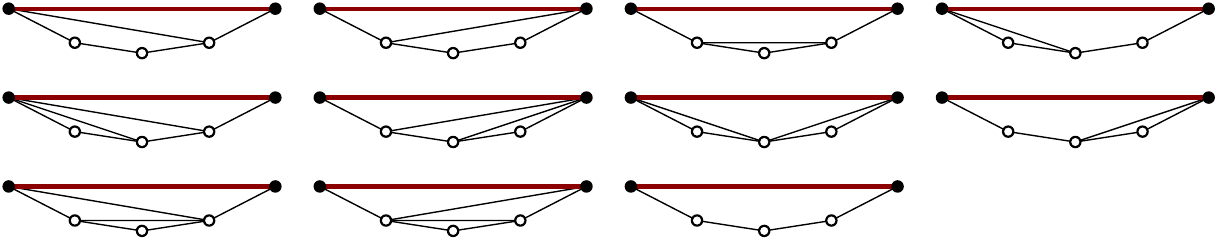}
\caption{Eleven different graphs for a pocket with three inner points in which all three points are covered.}
\label{fig:cases_tapped_3}
\end{figure}

\begin{figure}
\centering
\includegraphics[page=2]{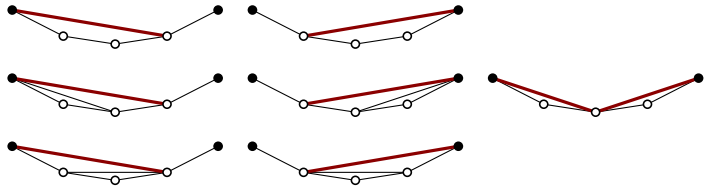}
\caption{Seven different graphs for a pocket with five inner points and two covered points.}
\label{fig:cases_tapped_2}
\end{figure}

\begin{figure}
\centering
\includegraphics{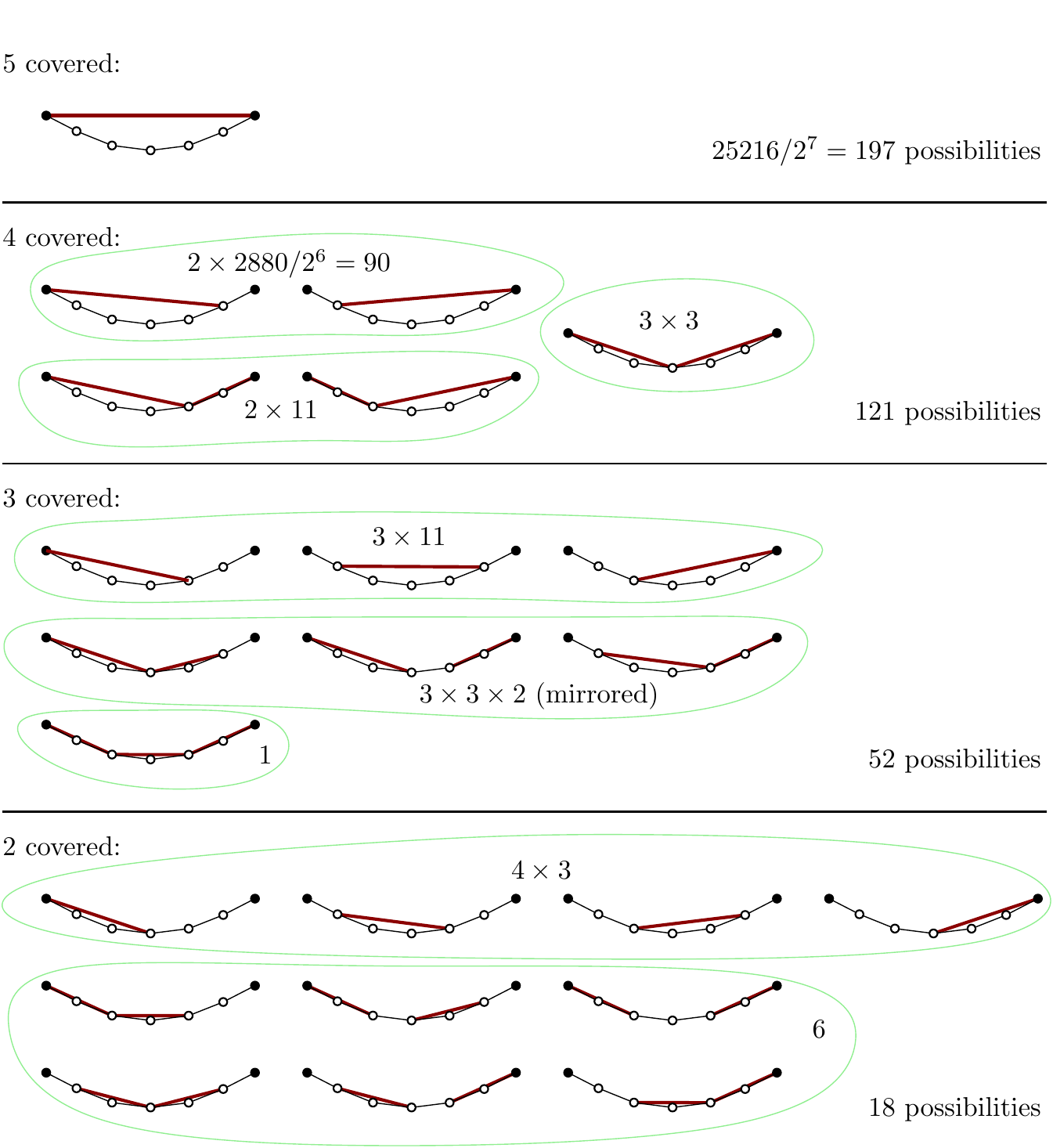}
\caption{Different possibilities for covering pockets with 5 inner points, i.e., convex chains with seven vertices.
We distinguish the different types of paths that cover the vertices, and multiply it with the number of graphs in the convex sub-parts.
The numbers for the convex sub-parts can be obtained from the known number of crossing-free graphs on small point sets in convex position.
See, e.g.,~\cite[sequence~A054726]{oeis}.
}
\label{fig:cases_tapped_5}
\end{figure}

We obtained the values using \emph{Mathematica 11.3}.
For the maximizations of the inner part (i.e., obtaining $\alpha_i$), numerical maximization was used;
hence, we here are not guaranteed to obtain the optimal value, but a lower bound.
Using these values, the computation of the base of the exponent was done with an accuracy of 20 digits.
The values given here are rounded down and thus provide an accurate lower bound, under the assumption of a correctly implemented mathematical software (which, for this computation, means approximating logarithms and roots up to the required accuracy).

\section{A lower bound using the largest eigenvalue}\label{sec:bounds_eigenvalue}
In order to use the production matrices devised to obtain bounds on the number of crossing-free graphs, we need to bound the elements of the matrix powers as $n$ tends to infinity.
This asymptotic information is given by the largest eigenvalue of the production matrix, which is what we analyze next.


A matrix $A$ is \emph{primitive} if it is non-negative (i.e., all its entries are $\geq 0$), and $A^N$ is positive for some natural number~$N$~\cite[p.~678]{meyer}.
Let $A$ be a production matrix of fixed size $m \times m$ that is primitive.
A well-known consequence of the Perron-Frobenius theorem is that
\begin{equation}\label{eq:perron_frobenius}
\lim_{n\to \infty} {\left(\frac{A}{r}\right)}^n = \frac{\ourvec{p}\ourvec{q}^T}{\ourvec{q}^T\ourvec{p}} > 0 \enspace ,
\end{equation}
where $r$ is the Perron root (i.e., largest eigenvalue) of~$A$~\cite[p.~674]{meyer}, and $\ourvec{p}$ and $\ourvec{q}$ are the associated eigenvectors of $A$ and $A^T$, respectively.
Since $\ourvec{p}$ and $\ourvec{q}$ are constant vectors, each entry of $A^n$ is in $\Theta(r^n)$.
An asymptotic lower bound on the entries of the $i$th power of a production matrix provides a means of obtaining the asymptotic number of elements constructed by that production matrix:
multiplying the initial vector $\ourvec{v}^{(n_0)}$ with $A^i$ gives the degree vector for $ci < m$ points.
We use (\ref{eq:perron_frobenius}) to obtain such a lower bound.

However, there is one caveat. 
The previous result is for a matrix $A$ of fixed size $m \times m$, while we would like to consider $n \times n$ matrices, for $n$ approaching infinity.
Still, we observe in the following lemma that this is enough for obtaining lower bounds.
Namely, the $n$th power of a fixed-size $(m \times m)$ production matrix, for some constant $m$, gives a lower bound on the number of graphs on $n$ vertices.


\begin{lemma}\label{lem:constant_matrix_size}
Let $P_m$ be the production matrix for crossing-free geometric graphs on a generalized double-zig-zag chain $Z_k$, as in (\ref{eq:pocket}) of size~$m$, and let $P_n$ be the analogous matrix of size~$n > m$.
Let $C_m(N) = (1, \dots, 1) P_m^N (1, 0, \dots, 0)^T$ and $C_n(N) = (1, \dots, 1) P_n^N (1, 0, \dots, 0)^T$ be the number of graphs on $k(N+1)$ vertices counted by $P_m$ and $P_n$, respectively.
If $m<n$, $C_m(N) \leq C_n(N)$ for any natural number $N$.
\end{lemma}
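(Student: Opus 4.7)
The plan is to prove the stronger entry-wise statement $(P_n^N \ourvec{e}_1)_i \ge (P_m^N \ourvec{e}_1)_i$ for every $i \le m$, viewing $P_m^N \ourvec{e}_1$ as embedded into $\mathbb{R}^n$ by zero-padding. Summing over $i$ and using non-negativity of the entries of $P_n^N \ourvec{e}_1$ then yields
\begin{equation*}
C_n(N) \;\ge\; \sum_{i=1}^{m}(P_n^N\ourvec{e}_1)_i \;\ge\; \sum_{i=1}^{m}(P_m^N\ourvec{e}_1)_i \;=\; C_m(N),
\end{equation*}
which is the claimed inequality. Non-negativity of the iterates is essentially free: $P$ is derived to count the graphs produced after adding one pocket, so despite the formal inverses appearing in~(\ref{eq:pocket}), the matrix $P$ itself has non-negative entries, and the initial vector $\ourvec{e}_1 \ge 0$.

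I would proceed by induction on $N$. The base case $N = 0$ is trivial, and in fact one has the stronger equality $P_n \ourvec{e}_1 = P_m \ourvec{e}_1$ (under the embedding), because the first column of $R$ is $\ourvec{e}_1$ itself, so $\ourvec{e}_1$ is a $1$-eigenvector of every integer power $R^j$; consequently the building blocks $R^k$ and $R^{1-m'}$ act trivially on $\ourvec{e}_1$ and~(\ref{eq:pocket}) collapses to a matrix-size-independent vector. For the inductive step, writing $\ourvec{u} = P_n^{N-1}\ourvec{e}_1$ and $\ourvec{w} = P_m^{N-1}\ourvec{e}_1$, the inductive hypothesis gives $u_j \ge w_j \ge 0$ for every $j \le m$, and then for any $i \le m$,
\begin{align*}
(P_n^N \ourvec{e}_1)_i
&= \sum_{j=1}^{n}(P_n)_{ij}\,u_j
 \;\ge\; \sum_{j=1}^{m}(P_n)_{ij}\,u_j \\
&\;\ge\; \sum_{j=1}^{m}(P_n)_{ij}\,w_j
 \;\ge\; \sum_{j=1}^{m}(P_m)_{ij}\,w_j
 = (P_m^N \ourvec{e}_1)_i,
\end{align*}
provided the block inequality $(P_n)_{ij} \ge (P_m)_{ij}$ holds for $i, j \le m$ and the non-negativities used are available.

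The crux of the proof is therefore the entry-wise block inequality $[P_n]_m \ge P_m$ on the top-left $m \times m$ sub-block. The individual factors in~(\ref{eq:pocket}) behave well under truncation: $R$ is upper triangular, so $[R_n^k]_m = R_m^k$; the same holds for its inverse (for an invertible upper triangular matrix the top-left block of the inverse equals the inverse of the top-left block), so $[R_n^{1-m'}]_m = R_m^{1-m'}$; and $[S_n^l]_m = S_m^l$. The mismatch between $[P_n]_m$ and $P_m$ stems entirely from the mixed products $R^k S^l R^{1-m'}$, where expanding entry $(i,j)$ involves summing over an intermediate index $b$ whose range extends up to $n - l$ in $P_n$ but only $m - l$ in $P_m$. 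I would show that the additional contributions from $b \in (m-l,\,n-l]$, once summed across all $(m',l)$ with the binomial weights $\binom{m'-1}{l-1}$, assemble into a non-negative correction; morally, these extra contributions count transitions that route through intermediate root-vertex degrees exceeding $m-1$, valid for $P_n$ but censored by $P_m$. The main obstacle is precisely this algebraic bookkeeping: individual summands are signed, because the entries of $R^{1-m'}$ carry signs for $m' \ge 2$, so non-negativity of the total correction relies on the very same cancellations already built into the combinatorial derivation of~(\ref{eq:pocket}); a fully combinatorial alternative—identifying $P_m^N \ourvec{e}_1$ with the degree vector of an explicit growing sub-family of outer-part plane graphs—is attractive but complicated by the fact that $P_m$ itself is not integer-valued (only its products with $\ourvec{e}_1$ are).
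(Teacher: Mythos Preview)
Your proposal has a genuine gap. The inductive step relies throughout on entrywise non-negativity of $P$, which you justify by asserting that ``the matrix $P$ itself has non-negative entries''. This is false in general: the paper states explicitly (immediately after the lemma) that for $k=3$ the matrix $P$ already contains negative entries. With negative entries present, all three inequalities in your chain
\[
\sum_{j=1}^{n}(P_n)_{ij}\,u_j \;\ge\; \sum_{j=1}^{m}(P_n)_{ij}\,u_j \;\ge\; \sum_{j=1}^{m}(P_n)_{ij}\,w_j \;\ge\; \sum_{j=1}^{m}(P_m)_{ij}\,w_j
\]
are unsupported: the first needs $(P_n)_{ij}\ge 0$ for $j>m$, the second needs $(P_n)_{ij}\ge 0$ for $j\le m$, and even the non-negativity of the iterates $w_j$ that you invoke for the third step is justified only via the non-negativity of $P$. (The iterates \emph{are} in fact non-negative, but the only way to see that without $P\ge 0$ is the combinatorial interpretation you are trying to avoid.) Moreover, what you yourself identify as the crux---the block inequality $[P_n]_m \ge P_m$---is never actually established; you concede that its verification rests on signed cancellations that ``rely on the very same cancellations already built into the combinatorial derivation'', which is circular.

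The paper sidesteps all of this algebra. Its proof is purely combinatorial and fits in a few lines: capping the root-vertex degree at $m-1$ at every production step yields a well-defined generating tree in which every graph produced still has a unique parent; the graphs so produced are a subset of those produced with the larger cap $n-1$, whence $C_m(N)\le C_n(N)$. This is precisely the ``fully combinatorial alternative'' you mention and dismiss as complicated---in fact it is the short route, and it is unaffected by $P_m$ having fractional or negative entries, since the argument never manipulates those entries directly.
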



\begin{proof}
The lemma is a direct consequence of the geometric interpretation of the powers of production matrices.
Observe that we can alter our production rules to produce only graphs with root vertices of degree at most $m-1$.
Clearly, the graphs that we obtain form a subset of all the graphs on the given number of vertices.
(Note, however, that this set of graphs is \emph{not} the one of graphs with degree at most $m-1$.)
In each iteration, we add a vertex, and produce a certain set of graphs from a previously given one.
Each produced graph again has a unique parent, and therefore the number of graphs produced is a lower bound on the total number of crossing-free graphs.
\end{proof}

Lemma~\ref{lem:constant_matrix_size} is particularly remarkable as the interested reader may have noticed that the sub-matrix defined by the first $m$ entries of $P_n$ is different from $P_m$, and it even contains fractions. 
Intuitively, these effects are due to the use of the inverse matrix $R^{-1}$ in (\ref{eq:pocket});
the fact that no graphs with a high-degree root vertex are produced is taken into account by the inverse matrix.

The Perron-Frobenius theorem requires a matrix that is primitive.
However, it turns out that this is not the case for $P$, e.g., when $k=3$;
there, $P$ can already have negative entries.
In the following, we therefore define a related matrix $P'$, to which we will be able to apply the Perron-Frobenius theorem.
Recall the definition of $P$ in~(\ref{eq:pocket}):
\[
P = R^k\left(R + \sum_{m = 1}^{k+1} \sum_{l=1}^m \binom{m-1}{l-1} S^lR^{1-m}\right) \enspace.
\]
We define
\[
P' = \left(R + \sum_{m = 1}^{k+1} \sum_{l=1}^m \binom{m-1}{l-1} S^lR^{1-m}\right)R^k = \left(R^{k+1} + \sum_{m = 1}^{k+1} \sum_{l=1}^m \binom{m-1}{l-1} S^lR^{k+1-m}\right) \enspace .
\]
Observing that $P^j = R^k P'^{j-1} \left(R^{k+1} + \sum_{m = 1}^{k+1} \sum_{l=1}^m \binom{m-1}{l-1} S^lR^{k+1-m}\right)$, we can rewrite
\[
\ourvec{v}^{(n)} = P^{(n-1)/(k+1)-1} \ourvec{v^{(k+1)}}
\]
to
\[
\ourvec{v}^{(n)} = R^k P'^{(n+1)/(k+1)-2} \left(R + \sum_{m = 1}^{k+1} \sum_{l=1}^m \binom{m-1}{l-1} S^lR^{1-m}\right) \ourvec{v}^{(k+1)} \enspace .
\]
Observe that $P'$ models producing the graphs where the root vertex is a leading vertex of a pocket.
We start at $\ourvec{v}^{(k+1)}$, obtain the numbers of graphs at the leading vertex with index $k+2$ (using the matrix in the brackets), and then apply $P'$ for each new pocket;
Finally, we apply $R^k$ to a account for the regular vertices in the last pocket.
To obtain an asymptotic lower bound for the number of graphs, we are thus interested in the entries of powers of $P'$.

\begin{lemma}\label{lem:matrix_ok}
The matrix $P'$ of size~$m$  defined by (\ref{eq:pocket}) is primitive for all~$k>0$.
\end{lemma}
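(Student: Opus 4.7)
My plan is to verify the two defining properties of a primitive matrix: entrywise non-negativity, and the existence of a power with all entries strictly positive. The aperiodicity will essentially follow for free once the band structure of $P'$ is in hand, since every state will turn out to have a self-loop.

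For non-negativity, the key observation is that in the formula
\[
P' \;=\; R^{k+1} + \sum_{m=1}^{k+1}\sum_{l=1}^{m} \binom{m-1}{l-1}\, S^l R^{k+1-m},
\]
every exponent $k+1-m$ is non-negative on the summation range. Since $R$, $S$, and the binomial coefficients are all non-negative, so is $P'$. This is precisely the point of the conjugation trick performed just before the lemma: the original $P$ contains $R^{1-m}$, and $R^{-1}$ can have negative entries.

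Next I would determine the zero pattern of $P'$. A short induction on the exponent shows that $R^t$ is upper triangular with strictly positive entries on and above the diagonal for every $t \ge 0$. Multiplication by $S^l$ on the left shifts the rows down by $l$: explicitly, $(S^l R^{k+1-m})_{ij} = (R^{k+1-m})_{i-l,\,j}$ when $i > l$, and is $0$ otherwise. Hence $(S^l R^{k+1-m})_{ij} > 0$ iff $l < i$ and $i - j \le l$. Taking $m = l$ for each $l \in \{1, \dots, k+1\}$ (whose binomial coefficient is $\binom{l-1}{l-1} = 1$), together with the $R^{k+1}$ summand for the part on and above the diagonal, yields $P'_{ij} > 0$ whenever $i - j \le k + 1$. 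Conversely, every summand vanishes as soon as $i - j > k + 1$, because $l \le k+1$ forces $i - l > j$ and $R^{k+1-m}$ is upper triangular. Hence the non-zero pattern of $P'$ is exactly the band $\{(i,j) : i \le j + k + 1\}$.

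Primitivity then follows from a short graph-theoretic argument. View $P'$ as the adjacency matrix of a directed graph $G$ on $\{1, \dots, m\}$ with an arc $j \to i$ whenever $P'_{ij} > 0$. Every vertex carries a self-loop (take $i = j$, so that $i - j = 0 \le k + 1$), so $G$ is aperiodic. Moreover, from any vertex $j$ one may reach any vertex in $\{1, \dots, j + k + 1\}$ in a single step; in particular any vertex of index larger than $j$ is reached after at most $\ceil{(m-1)/(k+1)}$ steps. Padding with self-loops, for every $N \ge \ceil{(m-1)/(k+1)}$ and every pair $(i,j)$ there exists a walk of length exactly $N$ from $j$ to $i$ in $G$, and hence $(P'^N)_{ij} > 0$, proving $P'$ primitive. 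The main obstacle in this proof is the band-width computation in the third paragraph: it is routine, but requires careful bookkeeping of the indices in the $S^l R^{k+1-m}$ products; once it is in hand, primitivity drops out immediately from the fact that every vertex of the resulting digraph has a self-loop.
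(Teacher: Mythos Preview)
Your proof is correct and follows essentially the same route as the paper: show $P'$ is non-negative because all exponents of $R$ in its defining sum are non-negative, locate enough positive entries (in particular on the diagonal and the first subdiagonal) to make the associated digraph strongly connected and aperiodic, and conclude primitivity. The paper is more economical: it only observes that $R^{k+1}$ gives positive entries on and above the diagonal and that the single summand $SR^k$ (from $m=l=1$) fills in the first subdiagonal, which already forces $(P')^{N}>0$ for $N\ge m-1$; you instead compute the full band width $k+1$, which yields the slightly better exponent $N\ge\ceil{(m-1)/(k+1)}$ but is not needed for the lemma. One small slip: your claim that $R^t$ has strictly positive entries \emph{above} the diagonal fails for $t=0$ (where $R^0=I$), so the stated equivalence for $(S^lR^{k+1-m})_{ij}$ is off when $m=k+1$; this does not affect your conclusion, since for the positivity direction you only need the diagonal of $R^{k+1-l}$ to be positive, and for the vanishing direction you only need $R^{k+1-m}$ to be upper triangular.
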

\begin{proof}
$P'$ is the sum of positive powers of $S$ and $R$.
Hence, it is non-negative (and actually, an integer matrix).
It remains to show that there exists a natural number $N$ such that all entries of $P'^N$ are positive.
All entries above the first subdiagonal of $R$ are positive, and hence the same holds for $R^{k+1}$, and thus for $P'$.
For $m = l = 1$, we get the summand $SR^k$.
This matrix has positive entries in the first subdiagonal.
Hence, all entries of $P'$ above or on the first subdiagonal are positive.
Thus, for $N\geq m-1$, the entries of $P'^N$ are positive.
\end{proof}


It follows that we can apply the lemmas above to the production matrix $P'$.
In particular, the largest eigenvalue of each matrix $P'$ (for any combination of $k$ and $m$) will give a lower bound on the number of plane graphs.

We computed the eigenvalues using \emph{Mathematica~11.3}, again with an accuracy of at least 20 digits.
For $Z_2$,
the largest eigenvalue is at least $124.22239555$, when taking the constant-size production matrix of size 1024.%
\footnote{After the presentation of this work at the \emph{European Workshop on Computational Geometry 2018}, G\"unter Rote (personal communication)
applied an extension of a method that was first
used in~\cite[Theorem~12]{asinowsi_matchings}
 for
a geometric counting problem similar to ours:
non-crossing perfect matchings in
repetitively structured point sets.
He derived a
polynomial system for characterizing the largest eigenvalue of our production matrix for $Z_2$.
The numerical solution of the system gives $x \approx
124.225396744416
$.
By trying to find a
polynomial that fits this value, $x$ has experimentally been
established to be
a root of the polynomial $x^3 - 125x^2 + 96x + 28$.}
Recall that multiplying with matrix $P'$ once accounts for adding $k+1=3$ vertices.
For one chain, we thus have $\Omega(\sqrt[3]{124.22239555}^{n/2})$ graphs;
for both outer parts combined, we thus obtain $\Omega(\sqrt[3]{124.22239555}^{n})$.
For the inner part, we obtained $\Omega(4.1861094216284688831^n)$ in Section~\ref{sec:entropy}.
Accounting for the $2^n$ ways to add edges along the chains, we get
$\Omega((\sqrt[3]{124.22239555} \cdot 4.1861094216284688831 \cdot 2)^n) \in \Omega(41.773981586^n)$ crossing-free graphs.

As already mentioned, after studying several values of $k$, the best bound that we were able to find was the one for $Z_5$ (with $m=1024$).
For the inner part, we obtained $\Omega(4.6796443062467462506^n)$ in Lemma~\ref{lem:inner_z5}.
The largest eigenvalue for the matrix obtained from (\ref{eq:pocket}),  with $m=1024$, is $8303.6171640967198428$.
As we add six vertices for each pocket, we need to take the sixth root of this value.
This results in $\Omega((\sqrt[6]{8303.6171640967198428} \cdot 4.6796443062467462506 \cdot 2)^n) \in \Omega(42.116673256039055102^n)$ graphs,
and thus gives the main result of this paper.

\begin{theorem}
There exist sets of $n$ points with $\Omega(42.116673256039055102^n)$ crossing-free graphs.
\end{theorem}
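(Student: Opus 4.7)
The plan is to assemble the bound by instantiating all of the preceding machinery on the generalized double zig-zag chain $Z_5$ (i.e., $k=5$), which the tables in Section~\ref{sec:counting} single out as the best among the parameters examined. The count will be factored into three independent contributions that, by construction of $Z_k$, do not interact: (i)~the plane graphs in the inner part, (ii)~the plane graphs in each of the two outer parts, and (iii)~the $2^n$ binary choices for whether each of the $n$ edges separating the inner and outer regions (the consecutive-chain and convex-hull edges) is present. For (i), I would quote Lemma~\ref{lem:inner_z5} directly, yielding $\Omega(4.6796443062467462506^n)$ graphs in the inner part.

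For (ii), I would take the outer-part production matrix $P'$ obtained from the definition right after Lemma~\ref{lem:constant_matrix_size} with $k=5$, truncated to size $m=1024$. Lemma~\ref{lem:matrix_ok} certifies that this truncated $P'$ is primitive, so the Perron--Frobenius consequence~(\ref{eq:perron_frobenius}) applies, and Lemma~\ref{lem:constant_matrix_size} ensures that the entries of powers of the truncated matrix remain a valid lower bound on the number of outer-part graphs. A numerical computation in \emph{Mathematica~11.3} with at least twenty digits of accuracy (as in the rest of the paper) produces a Perron root of at least $8303.6171640967198428$. Since one application of $P'$ corresponds to adding a whole pocket of $k+1=6$ vertices, and a single chain of $Z_5$ contains $n/2$ vertices, one chain contributes $\sqrt[6]{8303.6171640967198428}^{\,n/2}$ graphs, and the two symmetric outer chains together contribute $\sqrt[6]{8303.6171640967198428}^{\,n}$.

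Multiplying the three independent factors from (i), (ii), and (iii) gives
\[
\Omega\!\left(\bigl(\sqrt[6]{8303.6171640967198428}\cdot 4.6796443062467462506 \cdot 2\bigr)^n\right) \;\subseteq\; \Omega(42.116673256039055102^n),
\]
which is the announced bound. Strictly speaking, this argument gives the bound for point sets of the specific cardinalities realized by $Z_5$ (i.e., $n=2z$ with $z\equiv 1\pmod 6$); the bound for arbitrary $n$ follows by the customary device of padding, which contributes only a constant factor absorbed by the $\Omega$.

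The only substantive obstacle is numerical rather than conceptual: the Perron root of a $1024\times 1024$ matrix and a subsequent sixth root must be computed with enough precision that, after rounding \emph{down}, the displayed base $42.116673256039055102$ is still a legitimate underestimate. All inputs to this computation ($P'$ itself and the inner-part bound of Lemma~\ref{lem:inner_z5}) are explicit, and since $P'$ is sparse and highly structured, standard arbitrary-precision eigenvalue routines are adequate. No new geometric ingredient is required beyond what has been established in the preceding sections.
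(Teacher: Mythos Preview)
Your proposal is correct and follows essentially the same approach as the paper: factor the count on $Z_5$ into inner part (Lemma~\ref{lem:inner_z5}), outer part (Perron root of the size-$1024$ truncation of $P'$, taking the sixth root since each pocket contributes six vertices), and the $2^n$ separating edges, then multiply. If anything, you are slightly more explicit than the paper in invoking Lemmas~\ref{lem:constant_matrix_size} and~\ref{lem:matrix_ok} and in handling the residue-class restriction on~$n$.
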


\section{A note on mixing pocket sizes}
The construction used in the previous section uses pockets with a fixed number $k$ of inner points.
However, one could consider using  more than one pocket size.
Namely, consider a point set formed by two chains like a generalized double-zig-zag chain, but where the pockets have different numbers of points.

One interesting observation is that the order of the pockets does not matter.
For the inner part, this follows from the counting in Section~\ref{sec:entropy};
the number only depends on the number and variants of coverings.
For the outer part, we can use a similar argument.
The two outer parts of such chains correspond to so-called \emph{almost convex polygons} (in which the points are connected by the polygon boundary from left to right), that were previously considered by Hurtado and Noy~\cite{almost_convex}.
They made a statement for triangulations that is analogous to the following.

\begin{proposition}
Consider an almost convex polygon $P$ with two adjacent pockets $A = (p_1, \dots, p_k)$ and $B=(p_k, \dots p_l)$ with convex vertices $p_1$, $p_k$ and $p_l$.
Then the almost convex polygon $P'$ in which the pockets $A$ and $B$ are swapped has the same number of plane graphs as~$P$.
\end{proposition}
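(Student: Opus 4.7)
The plan is to exhibit an explicit bijection $\phi$ between the plane graphs on $P$ and those on $P'$. Given a plane graph $G$ on $P$, I would partition its edges into three classes: edges with both endpoints in pocket $A$ (set $E_A$), edges with both endpoints in pocket $B$ (set $E_B$), and the remaining \emph{external} edges $E_{\mathrm{ext}}$, which either go between $A$ and $B$ or join a vertex of $A\cup B$ to a vertex of $P$ outside $A\cup B$.

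The key structural observation is that, because each pocket is flat, the region bounded by a pocket's boundary chain together with the chord between its two convex vertices is a thin convex cap. Consequently, every edge of $E_A$ lies in the cap of $A$, every edge of $E_B$ lies in the cap of $B$, and no edge of $E_{\mathrm{ext}}$ enters either cap. Hence the three edge sets may be chosen independently: any triple $(E_A, E_B, E_{\mathrm{ext}})$ of internally crossing-free sets glues to a plane graph on $P$. The numbers of admissible $E_A$ and $E_B$ depend only on the sizes of the pockets (they count plane graphs on convex polygons of the respective sizes, with the treatment of pocket boundary edges fixed as in Section~\ref{sec:counting}), and so are unchanged by the swap.

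It therefore suffices to show that the number of admissible external edge sets $E_{\mathrm{ext}}$ is invariant under swapping $A$ and $B$. I would argue that the swap induces a natural label permutation $\sigma$ that sends the $i$th vertex of pocket $A$ in $P$ to the point occupying the analogous index-within-the-pocket in $P'$, and similarly for $B$. The claim is that $\sigma$ carries crossing-free external edge sets of $P$ bijectively to those of $P'$: the flatness of the pockets ensures that an external edge interacts with each pocket only through a thin sliver, so whether two external edges cross depends solely on the pocket membership and within-pocket index of each of their four endpoints together with the cyclic order of the convex vertices $p_1, p_k, p_l$ on the boundary of $P$, and this combinatorial data is preserved by $\sigma$.

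The main obstacle is making this last claim precise. It amounts to a case analysis on four-tuples of endpoints of pairs of external edges, classifying each endpoint by its pocket and checking that in each case the relevant orientation predicates are determined by the combinatorial data above. Flatness is exactly what allows us to replace continuous positions with their pocket-and-index labels for the purposes of crossing detection. Once verified, combining $\sigma$ on the external edges with the size-preserving bijections on $E_A$ and $E_B$ yields the desired bijection $\phi$, in analogy with the argument of Hurtado and Noy~\cite{almost_convex} for triangulations.
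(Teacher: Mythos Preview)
Your decomposition into independent parts $E_A$, $E_B$, $E_{\mathrm{ext}}$ does not work: the claim that ``no edge of $E_{\mathrm{ext}}$ enters either cap'' is false. Take an external edge incident to an inner (reflex) vertex $p_i$ of pocket $A$, say an edge $p_i p_j$ with $p_j$ in $B$ or outside $A\cup B$. The boundary of the cap of $A$ consists of the polygon boundary chain $p_1,\dots,p_k$ together with the chord $p_1 p_k$; since $p_i p_j$ lies inside the polygon, it cannot cross the chain, so it must leave the cap through the chord $p_1 p_k$ and hence traverses the cap's interior. Concretely, if $A=(p_1,\dots,p_5)$ then the edge $p_2 p_4\in E_A$ crosses every external edge incident to $p_3$. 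Thus $E_A$ and $E_{\mathrm{ext}}$ are not independent, and your counting factorization collapses. (There is also a definitional issue: your permutation $\sigma$ is supposed to send the $i$th vertex of each pocket to the vertex with the same within-pocket index in $P'$, but the shared convex vertex $p_k$ is simultaneously the last vertex of $A$ and the first vertex of $B$, and these two roles land at different places after the swap.)

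This non-independence is exactly why the paper's bijection is organized differently. Rather than separating off the within-pocket edges, the paper looks at whether the graph has any edge between the two pockets. If not, the identity relabeling works. If so, one takes the \emph{outermost} such edge $p_i p_j$ (with $i$ minimal and $j$ maximal); the region enclosed between the two pocket chains and $p_i p_j$ in $P$ is the mirror image of the corresponding region in $P'$ (with $p_i p_j$ sent to $p'_{l-i+1}p'_{l-j+1}$), and the remaining $l-j+i+1$ uncovered pocket vertices form a chain that can be identified vertex-by-vertex with the corresponding chain in $P'$ for the purpose of re-attaching edges that leave the two pockets. The key point your argument misses is that the internal pocket structure and the external edges interact through the reflex vertices, and the mirror map on the region cut off by $p_i p_j$ is what handles this interaction correctly.
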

\begin{proof}
If the two pockets have the same number of points (i.e., $l = 2k - 1$), then there is nothing to prove.
We map the set of plane graphs on $P$ to the set of plane graphs on $P'$, as before disregarding edges on the boundary.

Let $p'_1, \dots p'_l$ be the points on the boundary of $P'$, and note that $p_{l-k+1}$ is a convex vertex of $P'$.
If a graph does not contain an edge between the two pockets, then we can map each edge $ab$ in $P$ to the corresponding edge $a'b'$ in $P'$, which gives a bijection between these graphs.
For a plane graph $G$ with an edge between these two pockets, let $p_i p_j$ be the edge such that $i$ is minimal and $j$ is maximal (i.e., the edge covers all other such edges from the ``interior'').
We map this graph to a plane graph $G'$ on $P'$ in the following way.
The edge $p_i p_j$ is mapped to the edge $p'_{l-i+1} p'_{l-j+1}$.
In both polygons, we have now a chain of $l-j+i+1$ vertices;
for edges with only one endpoint on the pockets, we map that endpoint to the corresponding endpoint of the chain.
All other edges $ab$ are mapped to $a'b'$.
The region bounded by the pockets and $p_i p_j$ is the mirror image of the region bounded by the pockets and $p'_{l-i+1} p'_{l-j+1}$ in $P'$, which also defines a mapping for the edges inside these regions.
Hence, the number of plane graphs in $P$ and $P'$ is the same.
\end{proof}

Therefore, to count the number of plane graphs with different pocket sizes, we merely have to multiply a constant number of matrices for different $k$, and we get the matrix for a longer chain that is a combination of pockets of different sizes.
However, after experimenting with several such combinations of pocket sizes, we did not obtain improved bounds.

\section{Conclusions}
We slightly improved the previously best lower bound on the maximum number of crossing-free geometric graphs on $n$ points using production matrices.
Applying production matrices to families of well-structured point sets appears to be a conceptually simple way of obtaining bounds for important families of graphs.

It is interesting that with this technique it is also possible to obtain bounds when using a mix of different pocket sizes.
While we could not find combinations that improve the presented bound in this way, our search was not exhaustive, and we cannot rule out that such an approach could allow to improve the lower bound even further.


\paragraph*{Acknowledgments.}
We thank  G\"unter Rote and Andr\'e Schulz for valuable discussions.

\bibliographystyle{abbrv}

\begin{thebibliography}{1}

\bibitem{lower_bound_triangulations}
O.~Aichholzer, V.~Alvarez, T.~Hackl, A.~Pilz, B.~Speckmann, and B.~Vogtenhuber.
\newblock An improved lower bound on the minimum number of triangulations.
\newblock In S.~P. Fekete and A.~Lubiw, editors, {\em 32nd International
  Symposium on Computational Geometry (SoCG 2016)}, volume~51 of {\em LIPIcs},
  pages 7:1--7:16. Schloss Dagstuhl - Leibniz-Zentrum fuer Informatik, 2016.

\bibitem{convexity_pseudo_triangulations}
O.~Aichholzer, F.~Aurenhammer, H.~Krasser, and B.~Speckmann.
\newblock Convexity minimizes pseudo-triangulations.
\newblock {\em Comput. Geom.}, 28(1):3--10, 2004.

\bibitem{number_plane_geometric}
O.~Aichholzer, T.~Hackl, C.~Huemer, F.~Hurtado, H.~Krasser, and B.~Vogtenhuber.
\newblock On the number of plane geometric graphs.
\newblock {\em Graphs Combin.}, 23:67--84, 2007.

\bibitem{plane_geometric_soda}
O.~Aichholzer, T.~Hackl, B.~Vogtenhuber, C.~Huemer, F.~Hurtado, and H.~Krasser.
\newblock On the number of plane graphs.
\newblock In {\em Proc. 17th Annual {ACM-SIAM} Symposium on Discrete Algorithms
  ({SODA} 2006)}, pages 504--513. {ACM} Press, 2006.

\bibitem{bound_double_circle}
O.~Aichholzer, F.~Hurtado, and M.~Noy.
\newblock A lower bound on the number of triangulations of planar point sets.
\newblock {\em Comput. Geom.}, 29(2):135--145, 2004.

\bibitem{ajtai}
M.~Ajtai, V.~Chv{\'a}tal, M.~Newborn, and E.~Szemer{\'e}di.
\newblock Crossing-free subgraphs.
\newblock In {\em Theory and Practice of Combinatorics}, pages 9--12.
  North-Holland, 1982.

\bibitem{asinowsi_matchings}
A.~Asinowski and G.~Rote.
\newblock Point sets with many non-crossing perfect matchings.
\newblock {\em Comput. Geom.}, 68:7--33, 2018.

\bibitem{eco_survey}
E.~Barcucci, A.~D. Lungo, E.~Pergola, and R.~Pinzani.
\newblock {ECO}: a methodology for the enumeration of combinatorial objects.
\newblock {\em J. Differ. Equations Appl.}, 5(4-5):435--490, 1999.

\bibitem{deutsch}
E.~Deutsch, L.~Ferrari, and S.~Rinaldi.
\newblock Production matrices.
\newblock {\em Adv. in Appl. Math.}, 34(1):101--122, 2005.

\bibitem{bounds_multiplicity}
A.~Dumitrescu, A.~Schulz, A.~Sheffer, and {\relax Cs}.~D. T{\'{o}}th.
\newblock Bounds on the maximum multiplicity of some common geometric graphs.
\newblock {\em {SIAM} J. Discrete Math.}, 27(2):802--826, 2013.

\bibitem{guillermo}
G.~Esteban~Pascual.
\newblock Production matrices and enumeration of geometric graphs, 2018.
\newblock Master's thesis, {Universitat Polit\`ecnica de Catalunya}.

\bibitem{FN}
P.~Flajolet and M.~Noy.
\newblock Analytic combinatorics of non-crossing configurations.
\newblock {\em Discrete Mathematics}, 204(1-3):203--229, 1999.

\bibitem{alfredo_lower_bounds}
A.~Garc{\'{\i}}a~Olaverri, M.~Noy, and J.~Tejel.
\newblock Lower bounds on the number of crossing-free subgraphs of {$K_N$}.
\newblock {\em Comput. Geom.}, 16(4):211--221, 2000.

\bibitem{treeOfTrees}
M.~C. Hernando, F.~Hurtado, A.~M{\'{a}}rquez, M.~Mora, and M.~Noy.
\newblock Geometric tree graphs of points in convex position.
\newblock {\em Discrete Appl. Math.}, 93(1):51--66, 1999.

\bibitem{anna}
C.~Huemer and A.~de~Mier.
\newblock Lower bounds on the maximum number of non-crossing acyclic graphs.
\newblock {\em Eur. J. Comb.}, 48:48--62, 2015.

\bibitem{production_matrices_geometric_graphs}
C.~Huemer, A.~Pilz, C.~Seara, and R.~I. Silveira.
\newblock Production matrices for geometric graphs.
\newblock {\em Electr. Notes Discrete Math.}, 54:301--306, 2016.

\bibitem{characteristic_polynomials}
C.~Huemer, A.~Pilz, C.~Seara, and R.~I. Silveira.
\newblock Characteristic polynomials of production matrices for geometric
  graphs.
\newblock {\em Electr. Notes Discrete Math.}, 61:631--637, 2017.

\bibitem{almost_convex}
F.~Hurtado and M.~Noy.
\newblock Counting triangulations of almost-convex polygons.
\newblock {\em Ars Comb.}, 45:169--179, 1997.

\bibitem{treeOfTriangulations}
F.~Hurtado and M.~Noy.
\newblock Graph of triangulations of a convex polygon and tree of
  triangulations.
\newblock {\em Comput. Geom.}, 13(3):179--188, 1999.

\bibitem{MeVe}
D.~Merlini and M.~C. Verri.
\newblock Generating trees and proper {R}iordan arrays.
\newblock {\em Discrete Math.}, 218(1--3):167--183, 2000.

\bibitem{meyer}
C.~D. Meyer.
\newblock {\em Matrix analysis and applied linear algebra}.
\newblock SIAM, Philadelphia, 2000.

\bibitem{probability_and_computing}
M.~Mitzenmacher and E.~Upfal.
\newblock {\em Probability and computing - randomized algorithms and
  probabilistic analysis}.
\newblock Cambridge University Press, 2005.

\bibitem{sharir_sheffer_charging}
M.~Sharir and A.~Sheffer.
\newblock Counting plane graphs: Cross-graph charging schemes.
\newblock {\em Combin. Probab. Comput.}, 22(6):935--954, 2013.

\bibitem{sheffer_webpage}
A.~Sheffer.
\newblock Some plane truths.
\newblock https://adamsheffer.wordpress.com/numbers-of-plane-graphs/.
\newblock Retrieved June 14, 2018.

\bibitem{oeis}
N.~J.~A. Sloane.
\newblock The on-line encyclopedia of integer sequences.
\newblock https://oeis.org.
\newblock Retrieved June 14, 2018.

\end{thebibliography}

\end{document}